%
%
%
%
%
%
%
\documentclass[smallextended]{svjour3}       
\smartqed  
\usepackage{graphicx}

\usepackage{mathrsfs}
\usepackage{amsfonts}
\usepackage{amsmath}
\usepackage{fancyhdr}
\usepackage{textcomp}
\usepackage[official]{eurosym}
\usepackage{color}
\usepackage{enumitem}
\usepackage{amssymb}
\usepackage{caption}
\usepackage{multirow}

\usepackage{esint}
\usepackage{amssymb}
\usepackage{nicefrac}
\usepackage{mathrsfs}
\usepackage{amsfonts}
\usepackage{textcomp}
\usepackage{algorithm}
\usepackage{algorithmic}
\usepackage{latexsym}
\usepackage{dsfont}
\usepackage{color}
\usepackage{bbm}
\newcommand{\AVaR}{\text{AV@R}}
\newcommand{\VaR}{\text{V@R}}
\def \one {\mathchoice {\hbox{1\kern -0.27em l}}{\hbox{1\kern -0.27em l}}
                     {\small{1\kern -0.27em l}}{\small{1\kern -0.27em l}}}
\DeclareMathOperator*{\esssup}{ess\,sup}
%
%
%
%
%
\begin{document}

\title{The distortion principle for insurance pricing: properties, identification and robustness
}


\author{Daniela Escobar         \and
        Georg Ch. Pflug 
}


\institute{Daniela Escobar \at
              University of Vienna. Department of Statistics and Operations Research (ISOR),
Oskar-Morgenstern-Platz 1, A- 1090 Wien-Vienna, Austria \\
              \email{daniela.escobar@univie.ac.at}           
           \and
           Georg Ch. Pflug \at
              ISOR and International Institute for Applied Systems Analysis (IIASA), Laxenburg,
Austria. \\
\email{georg.pflug@univie.ac.at}
}

\date{Received: February 27, 2018 / Accepted: date}

\maketitle

\begin{abstract}
Distortion (Denneberg 1990)  is a well known premium calculation principle for insurance contracts. In this paper, we study sensitivity properties of distortion functionals w.r.t. the assumptions for risk aversion as well as robustness w.r.t. ambiguity of the loss distribution. Ambiguity is measured by the Wasserstein distance. We study variances of distances for probability models and identify some worst case distributions.  In addition to the direct problem we also investigate the inverse problem, that is how to identify the distortion density on the basis of observations of insurance premia.

\keywords{Ambiguity \and Distortion premium \and Dual representation \and Premium principles \and Risk measures \and Wasserstein distance  }
\end{abstract}

\section{Introduction}\label{sec:1}

The function of the insurance business is to carry the risk of a loss of the customer for a fixed amount, called the premium. The premium has to be larger than the expected loss, otherwise the insurance company faces ruin with probability one.
The difference between the premium and the expectation is called the {\em risk premium}.
There are several principles, from which an insurance premium is calculated on the basis of the loss distribution.

Let $X$ be a (non-negative) random loss variable. Traditionally, an insurance premium is a functional, $\pi \: : \{  X\geq 0 \text{ defined on } (\Omega , \mathcal{F}, P) \} \rightarrow \mathbb{R}_{\geq 0}$. We will work with functionals that depend only on the distribution of the loss random variable (sometimes called law-invariance or version-independence property, Young 2014~\cite{young2014}). If $X$ has  distribution function $F$ we use the notation $\pi(F)$  for the pertaining insurance premium, and   $\mathbb{E}(F)$ for the expectation of  $F$. We use alternatively the notation $\pi(F)$ or $\pi(X)$, resp. $\mathbb{E}(F)$ or $\mathbb{E}(X)$ whenever it is more convenient. To the extent of the paper, a more specific notation is used for particular cases of the premium.\\

We consider the following basic pricing principles:
\begin{itemize}
\item The distortion principle (Denneberg 1990~\cite{denneberg1990}).
\item The certainty equivalence principle (v Neumann and Morgenstern 1947~\cite{neumann1947}).
\item The ambiguity principle (Gilboa and Schmeidler 1989~\cite{gilboa1989}).
\item Combinations of the previous (for instance Luan 2001~\cite{luan2001}).
\end{itemize}

{\bf The distortion principle.}
The distortion principle is related to the idea of stress testing. The original distribution function $F$ is modified (distorted) and the premium is the expectation of the modified distribution.  If $g:\, [0,1] \rightarrow \mathbb{R}$  is a concave monotonically increasing function with the property $g(0)=0$, $g(1)=1$, then the distorted distribution $F^{g}$ is given by
\[
F^{g}(x)=1-g(1-F(x)).
\]
The function  $g$ is called the {\em distortion function} and
\[
h(v)=g^{\prime}(1-v),
\]
with  $g^\prime$ being the derivative of $g$, is the {\em distortion density}.\footnote{The derivative of a concave function is a.e. defined, even if it is not differentiable everywhere.} Notice that $h$ is a density in $[0,1]$.  We denote by  $H(u)=\int_0^u h(v) \, dv$ the {\em distortion distribution}.
Since the assumptions imply that $g(x) \ge x$ for $0\le x \le 1$, $F^g \le F$, i.e. $F^g$ is first order stochastically larger than $F$.\footnote{ $F_1$ is first order stochastically larger than $F_2$ if $F_1(x)\le F_2(x)$ for all $x$.}  The distortion premium is the expectation of $F^{g}$
\[
\pi_h(F)=\int_{0}^{\infty}g(1-F(x))\,dx \ge \int_0^{\infty} (1- F(x)) \, dx = \mathbb{E}(X).
\]

By a simple integral transform, one may easily see that the premium can equivalently be written as
\begin{equation} 
\pi_h(F)=\int_{0}^{1}F^{-1}(v)\,h(v)\,dv = \int_0^1 \VaR_v(F) \, h(v)\, dv,
\end{equation}
where $\VaR_v(F) = F^{-1}(v)$, the quantile function. Note that a functional of this form is called an L-estimates (Huber 2011~\cite{huber2011}). If the random variable $X$ takes as well negative values, we could  generally define the premium as a  \textit{Choquet integral}
\begin{equation} 
\pi_h(F) = \int_{-\infty}^0 g(1-F(x)) - 1\, dx + \int_0^\infty g(1-F(x))\, dx.
\end{equation}

In principle, any distortion function which is monotonic and satisfies $g(u) \ge u$ is a valid basis for a distortion function. However, the concavity of $g$ guarantees that the pertaining distortion density $h$ is increasing, which - in insurance application - reflects the fact that putting aside risk capital gets more expensive for higher quantiles of the risk distribution. Nondecreasing distortion functions lead to non-negative distortion densities with the consequence that
$$\pi_h(F_1) \le \pi_h(F_2) \qquad \hbox{ whenver $F_2$ is stochastically larger than $F_1$. }$$
Relaxing the monotonicity assumption for $g$ would violate in general the monotonicity w.r.t. first stochastic order.

{\bf Examples of distortion functions.}
Widely used distortion functions $g$ resp. the pertaining distortion densities $h$ are
\begin{itemize}
\item the power distortion with exponent $s$.
If $0<s< 1$,
\begin{equation}\label{ghforsles1}
g^{(s)}(v)=v^{s},\qquad h^{(s)}(v)=s(1-v)^{s-1}.
\end{equation}
The premium is  known as the Proportional Hazard transform (Wang 1995~\cite{wang1995}) and  calculated as
\begin{equation} 
\pi_{h^{(s)}}(F) = \int_0^\infty 1- F(x)^s \, dx = s\int_0^1 F^{-1}(v)(1-v)^{s-1} \, dv.
\end{equation}

If $s\ge 1$, then we take
\begin{equation}\label{ghforsbig1}
g^{(s)}(v)= 1- (1-v)^s, \qquad h^{(s)}(v) = s v^{s-1}.
\end{equation}
The premium is
\begin{equation} 
\pi_{h^{(s)}}(F) = \int_0^\infty 1- (1-F(x))^s \, dx = s\int_0^1 F^{-1}(v)v^{s-1} \, dv.
\end{equation}

If we consider integer exponent, the premium has a special representation.

\begin{proposition}
Let $X^{(i)}$, $i=1, \dots, n$ be independent copies of the random variable $X$, then the power distortion premium with integer power $s$ has the representation
$$\pi_{h^{(s)}} (X)= \mathbb{E}\left(\max\lbrace X^{(1)}, \dots, X^{(s)}\rbrace \right).$$
\end{proposition}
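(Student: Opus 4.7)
The plan is to verify the identity by computing both sides explicitly in terms of the common distribution function $F$ of $X$, and checking that they reduce to the same integral. Since $X \geq 0$, I will work with the tail representation $\mathbb{E}(Y) = \int_0^\infty P(Y > y)\,dy$ throughout; if one wanted to allow signed $X$, the argument would adapt using the Choquet integral form already displayed in the excerpt.

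First I would compute the distribution function of $M := \max\{X^{(1)}, \ldots, X^{(s)}\}$. By the independence of the copies,
\[
F_M(x) = P(M \leq x) = \prod_{i=1}^{s} P(X^{(i)} \leq x) = F(x)^s,
\]
so applying the tail formula gives
\[
\mathbb{E}(M) = \int_0^\infty \bigl(1 - F(x)^s\bigr)\, dx.
\]

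Next I would match this with the distortion premium on the left-hand side. From the definition $\pi_h(F) = \int_0^\infty g(1-F(x))\, dx$ applied to the integer exponent case with $g^{(s)}(v) = 1-(1-v)^s$, one has
\[
g^{(s)}(1-F(x)) = 1 - \bigl(1-(1-F(x))\bigr)^s = 1 - F(x)^s,
\]
so $\pi_{h^{(s)}}(F) = \int_0^\infty (1-F(x)^s)\,dx$. Comparing with the previous display closes the identification $\pi_{h^{(s)}}(X) = \mathbb{E}(M)$. Alternatively, one could argue through the quantile representation, using $F_M^{-1}(v) = F^{-1}(v^{1/s})$ together with the substitution $u = v^{1/s}$ in $\int_0^1 F_M^{-1}(v)\, dv$ to recover $\int_0^1 F^{-1}(u)\, s\,u^{s-1}\,du$, i.e.\ the quantile form of $\pi_{h^{(s)}}$.

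There is no real obstacle here: the statement is essentially a restatement of ``the cdf of the maximum of $s$ iid copies is $F^s$.'' The only mild subtlety is conceptual, namely recognising that integrality of $s$ is exactly what allows the distortion $g^{(s)}(v) = 1-(1-v)^s$ to be interpreted as an order-statistic construction, since the binomial-like factorisation of $F^s$ is then an integer product rather than a fractional power.
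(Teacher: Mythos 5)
Your proof is correct and follows essentially the same route as the paper: identify $g^{(s)}(1-F(x)) = 1-F(x)^s$, note that $F^s$ is the distribution of the maximum of $s$ iid copies, and apply the tail formula for the expectation. The additional quantile-substitution argument you sketch is a nice redundancy but not a different method.
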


\begin{proof}
Let $F$ be the distribution of $X$. The power distortion premium for integer power $s$ is computed with $g^{(s)}$ in (\ref{ghforsbig1}) and by definition
$$\pi_{h^{(s)}}(F) =\int_0^\infty g^{(s)}(1-F(x))= \int_0^\infty 1 - F(x)^s \, dx.$$
The assertion follows from the fact that the distribution function of the random variable  $\max\lbrace X^{(1)}, \dots, X^{(s)}\rbrace$ is $F(x)^s$.
\end{proof}

Finally, notice that the distortion density is bounded for $s\ge 1$, but unbounded for $0<s<1$.
\item the Wang distortion or Wang transform (Wang 2000~\cite{wang2000})
\[
g(v)=\Phi\left(\Phi^{-1}(v)+\lambda \right),\qquad h(v)=\frac{\phi(\Phi^{-1}(1-v)+\lambda)}{\phi\left(\Phi^{-1}(1-v)\right)},\quad\lambda>0,
\]
where $\Phi$ is the standard normal distribution and $\phi$ its density.

\item the $\AVaR$ (average value-at-risk) distortion function and density are
\begin{equation}\label{ghcte}
g_\alpha(v)=\min\left\lbrace\frac{v}{1-\alpha} ,1\right\rbrace,\qquad h_\alpha(v)=\frac{1}{1-\alpha}\,\mathds{1}_{v\ge\alpha},
\end{equation}
where $0\leq \alpha<1$. The pertaining premium has different names, such as conditional tail expectation (CTE), CV@R (conditional value at risk) or ES (expected shortfall) (Embrechts et al. 1997~\cite{embrechts1997}). The premium is
\begin{equation} 
\pi_{h_\alpha}(F)=\int_{0}^{\infty}\min\left\lbrace\frac{1-F(x)}{1-\alpha},1\right\rbrace\,dx=\frac{1}{1-\alpha}\int_{\alpha}^{1}F^{-1}(v)\,dv.
\end{equation}
\item piecewise constant distortion densities. The insurance industry uses also piecewise constant increasing distortion functions.
For example, the following distortion function is used by a large reinsurer.
\begin{center}

\begin{tabular}{l|c|l|c}
$v$ & $h(v)$ & $v$ & $h(v)$ \\
\hline
[0,0.85] & 0.8443 & [0.988,0.992) & 3.6462 \\

[0.85,0.947) & 1.1731 & [0.992,0.993) &  4.0572 \\

[0.947,0.965) & 1.4121 & [0.993,0.996) & 6.5378 \\

[0.965,0.975) &  1.7335 & [0.996,0.996) & 12.7020 \\

[0.975,0.988) &  2.4806 & [0.996,1] & 14.9436. \\

\end{tabular}

\end{center}
\end{itemize}
For more examples on different choices of $h$ and also for different families of distributions, see Wang 1996~\cite{wang1996} and Furmann and Zitikis 2008~\cite{furman2008}.

{\bf Certainty Equivalence.}
Let $V$ be a convex, strictly monotonic disutility function.\footnote {The original notion of a utility function introduced by Neumann/Morgenstern was a concave monotonic $U$, such that the decision maker maximizes the expectation $\mathbb{E}(U(Y))$ of a profit variable $Y$. A disutility function can be defined out of a utility function by setting $V(u) = - U(-u)$.} The certainty equivalence premium is the solution of
\[
V(\pi)=\mathbb{E}(V(X)),
\]
i.e. it is obtained by equating the disutility of the premium and the expected disutility of the loss. The premium is written as follows
\[
\pi^V(F)=V^{-1}\left(\mathbb{E}(V(X))\right)= V^{-1} \left( \int_0^1 V\left(F^{-1}(v)\right) \, dv \right).
\]
By Jensen's inequality $\pi^V(F) \ge \mathbb{E}(F)$. Examples for disutilities $V$ are the power utility $V(x)=x^{s}$ for $s \ge 1$ or the exponential utility $V(x)=\exp(x)$.

Related to this premium, one could consider just the expected value and compute the expected disutility (Borch 1961~\cite{borch1961}) obtaining
\begin{equation}\label{expdisutility}
\pi(F) =  \mathbb{E}(V(X)).
\end{equation}
For generalizations of the CEQ premium see Vinel and Krokhmal 2017~\cite{vinel2017}.
{\bf The ambiguity principle. } Let $\mathfrak{F}$ be a family of distributions, which contains the "most probable" loss distribution $F$.
The ambiguity insurance premium is
\[
\pi^{\mathfrak{F}}(F) = \sup \{ \mathbb{E}(G) : G \in \mathfrak{F} \}.
\]
$\mathfrak{F}$ is called the {\em ambiguity set}. In an alternative, but equivalent notation, the ambiguity premium is given by
\begin{equation}\label{ambprem}
\pi^{\mathcal{Q}}(X)=\max\left\lbrace \mathbb{E}_{Q}(X)\,:\,Q\in\mathcal{Q}\right\rbrace ,
\end{equation}
where $\mathcal{Q}$ is a family of probability models containing the baseline model $P$. The functional inside the maximization needs not to be the expectation, but can be general, see e.g. Wozabal 2012~\cite{wozabal2012}, Wozabal 2014~\cite{wozabal2014}, Gilboa and Schmeidler 1989~\cite{gilboa1989} and our Section~\ref{sec:6}.

\begin{remark} In their seminal paper from 1989, Gilboa and Schmeidler~\cite{gilboa1989}  give an axiomatic approach to extended utility functionals of the form
\[
\min \left\{ \mathbb{E}_{Q}(U(Y))\,:\,Q\in\mathcal{Q}\right\},
\]
where $U$ is a utility function and $Y$ is a profit variable. For the insurance case, $U$ should be replaced by a disutility function $V$ and $Y$ should be replaced by a loss variable $X$ leading to an equivalent expression
\[
\max \left\{ \mathbb{E}_{Q}(V(X))\,:\,Q\in\mathcal{Q}\right\}.
\]
The link to (\ref{ambprem}) is obvious and it can be seen as a combination of expected disutility (\ref{expdisutility}) and ambiguity.
\end{remark}

\begin{remark}
Recall the fundamental pricing formula  of derivatives in financial markets states that
the price can be obtained by taking the maximum of the discounted expected payoffs, where the maximum is taken over all probability measures, which make the discounted
price of the underlying a martingale. This can be seen as an ambiguity price.
\end{remark}

The ambiguity premium is characterized by the choice of the ambiguity set $\mathfrak{F}$. In principle, this set can be arbitrary given as long as it contains $F$.
Convex premium functionals have a dual representation, which are also in the form of an ambiguity functional. For distortion functionals, this will be illustrated in the next section. Other important examples for  ambiguity premium prices can be defined through distances for probability distributions. Let $D$ be such a  distance, then an ambiguity set is given by
\[
\mathfrak{F}=\left\{ G:\,D(F,G)\le\epsilon\right\},
\]
with ambiguity premium
\[
\pi^\epsilon_{D}(F) = \max \{ \mathbb{E}(G) : D(F,G) \le \epsilon \}.
\]
We call $\epsilon$ the {\em ambiguity radius}. This radius quantifies not only the risk premium, but also the model uncertainty, since the real distribution is typically not exactly known and all we have is a baseline model $F$. In our Section~\ref{sec:6} we base ambiguity models on the Wasserstein distance $WD$.\\

{\bf Combined models.} Luan 2001~\cite{luan2001} introduced a combination of distortion and certainty equivalence premium prices by defining a variable $W$ distributed according to $F^g$  and setting
\[
\pi^V_h(F) = V^{-1}(\mathbb{E}[V(W)]) =  V^{-1} \left( \int_0^1 V\left(F^{-1}(v) \right) h(v) \, dv \right).
\]
Notice that $(F^g)^{-1} (v) = F^{-1}(1-g^{-1}(1-v))$. \\

More generally, one may also add ambiguity respect to the model and set
\begin{equation}\label{fullmodel}
\pi^{V,\, \epsilon}_{h}(F) = \sup \biggr{\lbrace} V^{-1} \left( \int_0^1 V\left(G^{-1}(v)\right) \,  h(v) \, dv \right) : D(F,G) \le \epsilon \biggr{ \rbrace}.
\end{equation}
Notice that (\ref{fullmodel}) contains all previous definitions by making some of the following parameter settings
$$h(v)= 1, V(v)=v, \epsilon=0.$$
If all parameters are set like that, we recover the expectation.

We could also consider the expected disutility premium (\ref{expdisutility})  and combine it with the distortion premium,  
\[
\int_0^1 V(F^{-1}(v))\, h(v) \,dv = \mathbb{E}[V(W)].
\]
Section~\ref{sec:6} will be dedicated to study the combination of  distortion and ambiguity premium prices.

As to notation, we denote by $\mathcal{L}^p$ the space of all random variables with finite $p$-norm for all $p\geq 1$
$$\| X \|_p = [\mathbb{E}(|X|^p)]^{1/p},$$ 
resp. $\| X \|_\infty = \esssup (|X|)$, the essential supremum. The same notation is used for any real valued function on $[0,1]$ and $p$ and $q$ are conjugates if $1/p + 1/q =1$.

\section{The distortion premium and generalizations}

The characterization and represestations of the distortion premium were  studied exhaustively. Among some of the most classic contributions we  mention the dual theory  of Yaari 1987~\cite{yaari1987}; and the characterization  by axioms of this premium developed in Wang, et al. 1997~\cite{wangetal1997}, where the power distortion for  $0<s<1$ is also characterized in a unique manner. A summary of other known  representations and new generalization of this premium will be presented below. Recall that any mapping $X \mapsto \pi(X)$ which is monotone, convex and fulfils translation equivariance\footnote{$\pi$ has translation equivariance property, if $\pi(X+c) = \pi(X) + c$, for $c\in \mathbb{R}$.} is a risk measure.  Furthermore, if $\pi$ is also positively homogeneous, monotonic w.r.t. the first stochastic order and subadditive\footnote{A premium $\pi$ is called subadditive, if $\pi(X+Y) \le \pi(X) + \pi(Y)$. Subadditivity and positive homogeneity imply convexity.}, then it is a \textit{coherent} risk measure (Artzner et al. 1999~\cite{artzner1999}). The distortion premium fulfils all these properties, therefore by the Fenchel-Moreau-Rockefellar Theorem, it has a dual representation.


\begin{theorem}\label{fenchm}(see Pflug 2006~\cite{pflug2006subdifferential}) The dual representation of the distortion premium with distortion density $h$ is given by
$$\pi_h(X) = \sup \{ \mathbb{E}(X \cdot Z) : Z= h(U), \hbox{ where $U$ is uniformly distributed on } [0,1] \}.$$
\end{theorem}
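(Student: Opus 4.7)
The plan is to recognize the right-hand side as the Hardy--Littlewood--Pólya rearrangement bound and then exhibit the maximizer via a comonotone coupling.

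First I would record the properties of $h$ that follow from the definition: since $g$ is concave, increasing with $g(0)=0$, $g(1)=1$, the derivative $g'$ is nonincreasing and nonnegative, so $h(v)=g'(1-v)$ is nondecreasing, nonnegative, and $\int_0^1 h(v)\,dv = g(1)-g(0)=1$. In particular, because $h$ is nondecreasing, for any uniform $U$ on $[0,1]$ the random variable $Z=h(U)$ has quantile function $F_Z^{-1}(v)=h(v)$.

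Next I would establish the inequality $\mathbb{E}(X\cdot Z)\le \pi_h(X)$ for every $Z$ of the prescribed form. This is the classical Hardy--Littlewood--Pólya rearrangement bound: for any two random variables $X,Z$ on the same probability space with the appropriate integrability,
\begin{equation*}
\mathbb{E}(X\cdot Z) \;\le\; \int_0^1 F_X^{-1}(v)\,F_Z^{-1}(v)\,dv.
\end{equation*}
Applying this with $F_Z^{-1}(v)=h(v)$ yields $\mathbb{E}(X\cdot Z)\le \int_0^1 F^{-1}(v)h(v)\,dv = \pi_h(X)$, where the last equality is the quantile representation of the distortion premium stated in the introduction.

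Finally I would attain the supremum by constructing a comonotone coupling. By the standard uniform-representation lemma, one can always find a uniform random variable $U^\star$ on $[0,1]$ (on a possibly enriched probability space) such that $X=F^{-1}(U^\star)$ almost surely; when $F$ is continuous one simply takes $U^\star=F(X)$, and at atoms one randomizes uniformly over the flat pieces of $F^{-1}$. Set $Z^\star=h(U^\star)$. Then $X$ and $Z^\star$ are comonotone (both are nondecreasing functions of the same $U^\star$), and a direct change of variables gives
\begin{equation*}
\mathbb{E}(X\cdot Z^\star) \;=\; \mathbb{E}\bigl(F^{-1}(U^\star)\,h(U^\star)\bigr) \;=\; \int_0^1 F^{-1}(v)\,h(v)\,dv \;=\; \pi_h(X),
\end{equation*}
so the bound is tight.

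The main technical obstacle is the case when $F$ has atoms or flat pieces: then $F(X)$ is not necessarily uniform, and one must invoke the uniform-representation construction (adjoining an independent uniform variable on the relevant level sets) to produce the $U^\star$ for which $X=F^{-1}(U^\star)$ a.s. Apart from that, the proof is a clean combination of rearrangement and comonotonicity.
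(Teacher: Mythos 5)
Your proof is correct, but it takes a genuinely different route from the one the paper invokes. The paper does not prove Theorem~\ref{fenchm}; it cites Pflug 2006 and, just before the theorem, frames the result as a consequence of the Fenchel--Moreau--Rockafellar theorem applied to the coherent functional $\pi_h$. That is the abstract convex-duality route: one shows $\pi_h$ is convex, monotone, positively homogeneous and translation-equivariant, concludes that its biconjugate coincides with itself, and identifies the resulting dual feasible set (densities with distribution $H$). You instead give a direct, self-contained argument: the Hardy--Littlewood rearrangement bound $\mathbb{E}(XZ)\le\int_0^1 F_X^{-1}(v)F_Z^{-1}(v)\,dv$ establishes the upper bound, and the comonotone coupling $U^\star$ with $X=F^{-1}(U^\star)$ a.s.\ attains it. This is more elementary and constructive -- it exhibits the optimal $Z^\star$ explicitly and avoids any appeal to separation/biconjugacy machinery. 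The trade-off is that the convex-duality route is the one that generalizes to the subdifferential representations that Pflug 2006 is really about, whereas yours is specific to the spectral/Choquet structure of $\pi_h$.

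One small point worth making explicit: the supremum in the theorem statement ranges over $Z=h(U)$ with $U$ uniform on the same space that carries $X$. For your construction of $U^\star$ (and indeed for the feasible set to be nonempty) the underlying probability space must be atomless. You flag this by mentioning ``a possibly enriched probability space,'' but that is a slightly different assertion than attaining the supremum on the given space; on an atomless space the uniform-representation lemma already gives $U^\star$ with $X=F^{-1}(U^\star)$ a.s.\ without enlargement, which is the cleaner way to phrase it. With that adjustment, your argument is complete.
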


Note that all admissible $Z$'s in Theorem~\ref{fenchm} are densities on [0,1], since $h\ge 0$ and $\mathbb{E}(h(U))=1$. To put it differently, given $X$   defined on   $(\Omega, \mathcal{F}, P)$ and let $\mathcal{Q}$ be the set of all probability measures on $(\Omega, \mathcal{F})$ such that the density
$\frac{dQ}{dP}$ has distribution function $H$, the distortion distribution, then
$$\pi_h(X) = \sup \{ \mathbb{E}_Q (X) : Q \in \mathcal{Q} \}.$$
Therefore, every distortion premium can be seen as well as an ambiguity premium with $\mathcal{Q}$ as the ambiguity set.

Let us look into more detail to the special case of the $\AVaR$ premium. In this case, the dual representation specializes to
$$\pi_{h_\alpha}(X) = \sup \lbrace \mathbb{E}(X \cdot Z) : 0\le Z \le \frac{1}{1-\alpha};\,  \mathbb{E}(Z)=1 \rbrace.$$

From the previous representation, we can see that the $\AVaR$-distortion densities $h_{\alpha}$ are the extremes of the convex set of all distortion densities. This fact implies that any distortion premium can be represented as mixtures of $\AVaR$'s, such representations are called Kusuoka representations (Kusuoka 2001~\cite{kusuoka2001}, Jouini et al. 2006~ \cite{jouini2006}). Coherent risks have a Kusuoka representation of the form
$$\pi(F) = \sup_{K\in \mathcal{K}} \int_0^1  \AVaR_\alpha (F) \, dK(\alpha),$$

where $\mathcal{K}$ is a collection of probability measures in [0,1].
In particular, for the distortion premium we have the following result (Pflug/R{\"o}misch 2007~\cite{pflug2007}).

 \begin{theorem} \label{Thmixavar} Any  distortion premium can be written as
 $$\pi_h(F) = \int_0^1 \AVaR_\alpha (X) \, dK(\alpha).$$
 The mixture distribution $K$ is given by the way how $h$ is represented as a mixture of the $\AVaR$-distortion densities, i.e.
 $$h(v) = \int_0^v \frac{1}{1-\alpha}  \, dK(\alpha).$$
\end{theorem}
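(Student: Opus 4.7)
The plan is to split the proof into two tasks: (i) assuming the mixture representation $h(v) = \int_0^v \frac{1}{1-\alpha}\,dK(\alpha)$ holds, verify the identity $\pi_h(F) = \int_0^1 \AVaR_\alpha(X)\,dK(\alpha)$; (ii) show that such a probability measure $K$ actually exists.

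For task (i), I would start from the quantile expression $\pi_h(F) = \int_0^1 F^{-1}(v)\,h(v)\,dv$, substitute the assumed formula for $h$, and apply Fubini to the nonnegative integrand on $\{(\alpha,v) : 0 \le \alpha \le v \le 1\}$:
$$\pi_h(F) = \int_0^1 F^{-1}(v) \int_0^v \frac{dK(\alpha)}{1-\alpha}\,dv = \int_0^1 \frac{1}{1-\alpha}\int_\alpha^1 F^{-1}(v)\,dv\,dK(\alpha).$$
The inner integral equals $\AVaR_\alpha(X)$ by the quantile formula for $\AVaR$ recorded in Section~1, so the right-hand side is $\int_0^1 \AVaR_\alpha(X)\,dK(\alpha)$, as desired.

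For task (ii), since $h$ is nondecreasing on $[0,1]$ with $\int_0^1 h(v)\,dv = 1$, it induces a nonnegative Lebesgue-Stieltjes measure $dh$ on $(0,1]$. I would define $K$ by placing point mass $h(0+)$ at $\alpha = 0$ and setting $dK(\alpha) := (1-\alpha)\,dh(\alpha)$ on $(0,1]$. A direct computation yields
$$\int_{[0,v]} \frac{dK(\alpha)}{1-\alpha} = h(0+) + \int_{(0,v]}\,dh(\alpha) = h(v),$$
which is the required mixture representation. Monotonicity together with $\int_0^1 h < \infty$ forces $(1-v)h(v) \to 0$ as $v \to 1$, so Stieltjes integration by parts gives
$$\int_{(0,1]}(1-\alpha)\,dh(\alpha) = \bigl[(1-\alpha)h(\alpha)\bigr]_0^1 + \int_0^1 h(\alpha)\,d\alpha = 1 - h(0+),$$
whence $K([0,1]) = h(0+) + (1 - h(0+)) = 1$ and $K$ is a probability measure on $[0,1]$.

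The main technical point to watch is that $h$ need only be monotone rather than absolutely continuous: it may have $h(0+) > 0$ (e.g.\ $h \equiv 1$, giving $K = \delta_0$ and recovering the expectation), jumps inside $(0,1]$ (as for the $\AVaR_\alpha$ density, where $K = \delta_\alpha$), or blow-up near $v=1$ (as for the proportional-hazard density with $s<1$). Using the Lebesgue-Stieltjes measure $dh$ treats all three cases uniformly, and the weight $(1-\alpha)$ in the definition of $dK$ vanishes linearly at $1$, which is exactly what is needed to keep $K$ finite even when $h$ is unbounded at $1$.
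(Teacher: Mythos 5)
Your proof is correct. The paper cites this result from Pflug and R\"omisch (2007) without reproducing the argument, so there is no in-paper proof to compare against; your Tonelli step (valid because $X\ge 0$) combined with the Lebesgue--Stieltjes construction $dK(\alpha)=(1-\alpha)\,dh(\alpha)$ plus a point mass $h(0+)$ at $\alpha=0$ is the standard derivation, and the observation that monotonicity and $\int_0^1 h<\infty$ force $(1-v)h(v)\to 0$ as $v\to 1$ correctly handles the unbounded case and makes the integration by parts legitimate.
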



The pure  $\AVaR_\beta$ is contained in this class by setting $K(\alpha) = \delta_\beta $, the Dirac measure at $\beta$. Moreover, the integral of the $\AVaR$'s is obtained for $K(\alpha) = \alpha $ and is defined as
\[
\int_0^1 \AVaR_\alpha(F) \, d\alpha= \int_0^1 F^{-1}(v) \, \left[ -\log(1-v) \right] \, dv,
\]
if the integral exists.

\begin{remark} Some other generalizations of the  distortion premium were studied in Greselin and Zitikis 2018~\cite{greselin2018}, where they consider a class of functionals
$$\int_0^1 \nu(\AVaR_\alpha(X), \AVaR_0(X))\, d\alpha,$$
with $\nu(\cdot,\cdot)$ an integrable function and show the Gini-index and Bonferroni-index belong to this class. These generalizations lead to \textit{inequality} measures instead of risk measures. 
\end{remark}
As a related generalization of the distortion premium one may consider
\begin{equation}\label{A}
R(X)=\int_0^1 \nu(\AVaR_\alpha(X)) \, k(\alpha) \, d\alpha,
\end{equation}
for some convex and  monotonic Lipschitz function $\nu$  and some non-negative function $k$ on [0,1]. Clearly, $R(X)$ is convex and monotonic, but in general is neither  positively homogeneous nor  translation equivariant unless $\nu$ is the identity (see Appendix for a proof). To our knowledge, functionals of the form (\ref{A}) are not used in the insurance sector. For this and some other  generalizations see the papers of Goovaerts et al. 2004~\cite{goovaerts2004} and Furman and Zitikis 2008~\cite{furman2008}.


\section{Continuity of the premium w.r.t. the Wasserstein distance}\label{sec:3}

In this section we study sensitivity properties of the distortion premium respect to the underlying  distribution. Some results in this section are related to those in Pichler 2013~\cite{pichler2013}, Pflug and Pichler 2014~\cite{pflug2014} and Kiesel et al. 2016~\cite{kiesel2016}. Similar results of continuity for variability measures  are studied in Furman et al. 2017~\cite{furmanetal2017}. To start, we recall the notion of the Wasserstein distance.

\begin{definition} Let  $(\Omega,d)$ be a metric space and  $P$, $\tilde{P}$ be two Borel probability measures on it. Then the Wasserstein distance of order $r\geq 1$ is defined as
\[
WD_{r,d} (P,\tilde{P})= \left( \inf_{X\sim P\atop Y\sim \tilde{P}} \mathbb{E}\left(d(X,Y)^r\right) \right) ^{1/r}.
\]
\end{definition}
Here the infimum is over all joint distributions of the pair $(X,Y)$, such that the marginal distributions are $P$ resp. $\tilde{P}$,  i.e. $X\sim P$, $Y \sim \tilde{P}$.

For two distributions $F$ and $G$ on the real line endowed with metric
$$d_1(x,y)= |x-y|.$$
this definition specializes to (see Vallender 1974~\cite{vallender1974})
\begin{eqnarray*}
WD_{1,d_1}(F,G) &=& \int_{-\infty}^\infty |F(x)-G(x)|\:dx
= \int_0^1 |F^{-1}(v) - G^{-1}(v)| \,dv.
\end{eqnarray*}

Therefore, the Wasserstein distance is the (absolute) area between the distribution functions which is also the (absolute) area between the inverse distributions. By a similar argument one may prove that the Wasserstein distance of order $r\geq 1$ with the $d_1$ metric on the real line is
\begin{equation} 
 WD^r_{r,d_1}(F,G)  = \int_{0}^1 |F^{-1}(v)-G^{-1}(v)|^r\:dv = \|F^{-1}-G^{-1}\|_r^r.
\end{equation}

We now study continuity properties of the functional $F \mapsto \pi_h(F)$.
\begin{proposition}[Continuity for bounded distortion densities] \label{contbound}  Let $F$ and $G$ be two distributions on the real line and $h$ a distortion density function. If the distributions have both finite first moments and $h$ is bounded, then
\[
\left| \pi_h(F) - \pi_h(G) \right| \leq || h ||_\infty \cdot WD_{1,d_1} (F,G).
\]
\end{proposition}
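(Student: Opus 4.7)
The plan is to work directly with the quantile representation of the distortion premium that was already established in the introduction, namely
\[
\pi_h(F) = \int_0^1 F^{-1}(v)\, h(v)\, dv,
\]
and combine it with the quantile representation of the Wasserstein distance on the real line, $WD_{1,d_1}(F,G) = \int_0^1 |F^{-1}(v) - G^{-1}(v)|\,dv$, which is also recalled just above the proposition.

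First I would write
\[
\pi_h(F) - \pi_h(G) = \int_0^1 \bigl(F^{-1}(v) - G^{-1}(v)\bigr)\, h(v)\, dv,
\]
which requires only that both integrals are finite; the finite first moments of $F$ and $G$ together with boundedness of $h$ guarantee this, since $|F^{-1}(v)\,h(v)| \le \|h\|_\infty \,|F^{-1}(v)|$ and $\int_0^1 |F^{-1}(v)|\,dv = \mathbb{E}(|X|) < \infty$ for $X \sim F$ (and similarly for $G$).

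Next, I would apply the triangle inequality for integrals, followed by the trivial bound $h(v) \le \|h\|_\infty$, to obtain
\[
|\pi_h(F) - \pi_h(G)| \le \int_0^1 |F^{-1}(v) - G^{-1}(v)|\, h(v)\, dv \le \|h\|_\infty \int_0^1 |F^{-1}(v) - G^{-1}(v)|\, dv.
\]
Recognizing the last integral as $WD_{1,d_1}(F,G)$ finishes the proof.

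There is essentially no hard step here: the argument is a one-line Hölder-type estimate once the two quantile representations are in place. The only thing to be a bit careful about is the implicit use of $L^1$--$L^\infty$ duality, i.e.\ ensuring that $F^{-1} - G^{-1} \in L^1([0,1])$, which follows from the finite first moment hypothesis via $\|F^{-1}-G^{-1}\|_1 = WD_{1,d_1}(F,G) \le \|F^{-1}\|_1 + \|G^{-1}\|_1 < \infty$.
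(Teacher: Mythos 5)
Your proof is correct and follows essentially the same route the paper takes. The paper itself only cites Pichler (2010) for this proposition, but immediately afterwards proves the generalization (Proposition~\ref{contqbigg1}) via H\"older's inequality applied to $\int_0^1 h(v)(F^{-1}(v)-G^{-1}(v))\,dv$; your argument is exactly the $p=1$, $q=\infty$ specialization of that, with the $L^\infty$--$L^1$ pairing written out directly instead of invoking H\"older in the abstract.
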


\begin{proof} See Pichler 2010~\cite{pichler2010}.
\end{proof}

\begin{remark}
The boundedness of $h$ is ensured if $g$ has a finite right hand side derivative at 0, and also if $g$ has finite  Lipschitz constant $L$, since $\|h\|_\infty \le L$.
\end{remark}

Proposition~\ref{contbound} can be easily generalized as follows.

\begin{proposition}[Continuity for distortion densities in $ \mathcal{L}^q$ for $q<\infty$]\label{contqbigg1} Let $F$ and $G$ be two distributions on the real line and $h$ a distortion density function. If $F$, $G $ have finite $p-$moments and $h\in \mathcal{L}^q$, then
\[ \left| \pi_h(F) - \pi_h(G) \right| \leq || h ||_q \cdot WD_{p,d_1} (F,G),\]
where $p$ and $q$ are conjugates.

\end{proposition}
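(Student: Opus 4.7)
The plan is to reduce the bound to a direct application of Hölder's inequality, using the quantile (L-statistic) representation of the distortion premium given in equation~(1).

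First, I would write
\[
\pi_h(F) - \pi_h(G) \;=\; \int_0^1 \bigl(F^{-1}(v) - G^{-1}(v)\bigr)\,h(v)\,dv,
\]
take absolute values, move them inside, and apply Hölder's inequality with conjugate exponents $p$ and $q$ (with $q<\infty$ so $p>1$):
\[
\bigl|\pi_h(F) - \pi_h(G)\bigr| \;\leq\; \Bigl(\int_0^1 |F^{-1}(v) - G^{-1}(v)|^p \, dv\Bigr)^{1/p} \Bigl(\int_0^1 h(v)^q \, dv\Bigr)^{1/q}.
\]
By the quantile formula for the Wasserstein distance recalled just above the proposition, the first factor equals $WD_{p,d_1}(F,G)$, and the second is $\|h\|_q$ by definition, yielding the desired inequality.

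Two small things need to be checked to make the manipulation rigorous. The hypothesis that $F$ and $G$ have finite $p$-moments ensures $\|F^{-1}\|_p, \|G^{-1}\|_p < \infty$ (since $\|F^{-1}\|_p^p = \mathbb{E}(|X|^p)$ for $X\sim F$), and hence by Minkowski $\|F^{-1}-G^{-1}\|_p < \infty$; together with $\|h\|_q<\infty$, Hölder then guarantees the integrand is absolutely integrable, so splitting the difference inside the integral is legitimate. Also $\pi_h(F)$ and $\pi_h(G)$ are themselves finite, again by Hölder applied with the constant density and the quantile functions.

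The only potential obstacle is the case where $F$ or $G$ may take negative values, since equation~(1) was stated for nonnegative losses. This is handled by using the Choquet-integral definition~(2), which under the moment assumptions coincides with $\int_0^1 F^{-1}(v)\,h(v)\,dv$ by a standard integration-by-parts/Fubini argument; once this identity is in place the proof proceeds exactly as above. Beyond that, the statement is essentially an immediate consequence of Hölder, which is why Proposition~\ref{contbound} corresponds to the limiting case $q=\infty$, $p=1$.
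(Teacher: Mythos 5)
Your proof is correct and uses exactly the same argument as the paper: split the difference of premia into an integral of the quantile-function difference against $h$, and apply H\"older's inequality with conjugate exponents $p,q$, identifying the two resulting factors as $WD_{p,d_1}(F,G)$ and $\|h\|_q$. The extra remarks on finiteness and the signed/Choquet case are sound bookkeeping but do not change the argument.
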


\begin{proof} By  H{\"o}lder's inequality for $p$ and $q$ we obtain
\begin{align*}
| \pi_h(F) - \pi_h(G)| &= \left| \int_0^1  h(v)   \cdot \left( F^{-1}(v)-  G^{-1}(v)\right)  \: dv\right| \\
&\leq  \left(  \int _0^1 \left| h(v) \right|^q \,dv \right)^{1/q} \cdot \left(  \int _0^1 \left|  F^{-1}(v)-  G^{-1}(v)   \right|^p \, dv \right)^{1/p}    \\
&\leq|| h ||_q \cdot WD_{p,d_1} (F,G).
\end{align*}
\end{proof}

\begin{example} Let $F$ and $G$ be two distributions  with finite first moments.
\begin{itemize}
\item For the $\AVaR$ distortion premium $ ||h_\alpha||_\infty = \frac{1}{1-\alpha}$, and therefore
\[| \pi_{h_\alpha}(F) - \pi_{h_\alpha}(G)|\leq \frac{1}{1-\alpha} \cdot WD_{1,d_1} (F,G) . \]
\item For the power distortion with $s\ge 1$, $ ||h^{(s)}||_\infty= s$, and therefore
\[| \pi_{h^{(s)}}(F) - \pi_{h^{(s)}}(G)|\leq  s\cdot WD_{1,d_1} (F,G) . \]
\end{itemize}
\end{example}

The power distortion with $0<s<1$ is not bounded.  The next result is dedicated for this particular case.
\begin{proposition}[Continuity for the the power distortion with $0<s<1$]\label{contunboundpower}   Let  $F$ and $G$ be distribution functions  and $h^{(s)}$  the distortion density defined in (\ref{ghforsles1}).
If $F$ and $G$ have finite $p-$moments for  $p>\frac{1}{s}$ and $h\in \mathcal{L}^q$,
then
\[| \pi_{h^{(s)}}(F) - \pi_{h^{(s)}}(G)| \leq  \frac{ s}{ \sqrt[q]{1+q\, (s-1)} }  \cdot  WD_{p,d_1} (F,G),\]
where $p$ and  $q$ are conjugates.
\end{proposition}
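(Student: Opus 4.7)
The plan is to invoke the previous Proposition on continuity for distortion densities in $\mathcal{L}^q$ directly, with $h = h^{(s)}$, and then to verify that the condition $p > 1/s$ is exactly what is needed to make $h^{(s)} \in \mathcal{L}^q$, while also giving an explicit evaluation of the norm $\|h^{(s)}\|_q$.

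First I would translate the hypothesis on $p$ into a hypothesis on $q$. Since $p$ and $q$ are conjugate, $\tfrac{1}{q} = 1 - \tfrac{1}{p}$, so the assumption $p > 1/s$ is equivalent to $\tfrac{1}{q} > 1-s$, i.e.\ $q(1-s) < 1$, i.e.\ $1 + q(s-1) > 0$. This is exactly the integrability threshold at $v=1$ for the unbounded density $h^{(s)}(v) = s(1-v)^{s-1}$.

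Second I would compute the $\mathcal{L}^q$-norm of $h^{(s)}$ explicitly:
\begin{equation*}
\|h^{(s)}\|_q^q = \int_0^1 \bigl(s(1-v)^{s-1}\bigr)^q\, dv = s^q \int_0^1 (1-v)^{q(s-1)}\, dv = \frac{s^q}{1+q(s-1)},
\end{equation*}
where the last integral converges by the preceding step. Taking $q$-th roots gives
\begin{equation*}
\|h^{(s)}\|_q = \frac{s}{\bigl(1+q(s-1)\bigr)^{1/q}}.
\end{equation*}

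Finally, applying Proposition~\ref{contqbigg1} with this $h^{(s)}$ — whose hypotheses are met since $F,G$ have finite $p$-moments and $h^{(s)}\in\mathcal{L}^q$ by the previous steps — yields
\begin{equation*}
|\pi_{h^{(s)}}(F) - \pi_{h^{(s)}}(G)| \le \|h^{(s)}\|_q \cdot WD_{p,d_1}(F,G) = \frac{s}{\sqrt[q]{1+q(s-1)}} \cdot WD_{p,d_1}(F,G),
\end{equation*}
which is the desired estimate. There is no real obstacle here; the only nontrivial point is the bookkeeping that shows the moment condition $p > 1/s$ is the \emph{sharp} integrability condition for the unbounded density $h^{(s)}$, so that Hölder's inequality (which underlies Proposition~\ref{contqbigg1}) can be applied and produces a finite constant.
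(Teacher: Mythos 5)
Your proof is correct and follows essentially the same route as the paper: translate $p>1/s$ into $1+q(s-1)>0$, compute $\|h^{(s)}\|_q$ explicitly as $s/(1+q(s-1))^{1/q}$, and invoke Proposition~\ref{contqbigg1}. The paper phrases the norm computation slightly differently (introducing $t=1+q(s-1)$ and recognizing $t(1-v)^{t-1}$ as a density), but the substance is identical.
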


\begin{proof} We first note that $p>\frac{1}{s}$ implies $q< \frac{1}{1-s} $ and let $t=1+q\, (s-1)>0$.

\begin{align*}
\left( \int_0^1  h^{(s)}(v)^q  \: dv  \right)^{1/q}&= \left(  \int_0^1 s^q \cdot(1-v)^{q\cdot (s-1)} \: dv  \right)^{1/q} \\
& =  \left( \int_0^1 s^q \cdot(1-v)^{t-1}  \: dv  \right)^{1/q}  \\
&= \frac{s}{ \sqrt[q]{t} } \cdot  \left( \int_0^1 t\,(1-v)^{t-1} \, dv\right)^{1/q}  = \frac{s}{ \sqrt[q]{t} }.
\end{align*}
Proposition~\ref{contqbigg1} proves the statement.
\end{proof}

The next result is a direct consequence of Proposition~\ref{contunboundpower}.

\begin{corollary}[Continuity for  distortion densities dominated by power distortion densities with $0<s<1$]\label{contdomhs} Let  $F$ and $G$ be distribution functions and $h$ a distortion density. If $h$ is such that $h(v)\leq c\cdot h^{(s)}(v)$, for all $v\in [0,1]$, $c>0$ and $0<s<1$,  $F$ and  $G$ have finite $p-$moments for $p>\frac{1}{s}$ , then   $h\in \mathcal{L}^q$ and
\[| \pi_h(F) - \pi_h(G)|  \leq  \frac{c\cdot s}{ \sqrt[q]{1+q\, (s-1)} } \cdot WD_{p,d_1}(F,G) ,\]

where   $p$ and $q$ are conjugates.
\end{corollary}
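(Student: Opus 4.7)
The plan is to chain together the pointwise domination hypothesis with the two previous results: the explicit $\mathcal{L}^q$-norm computation carried out inside the proof of Proposition~\ref{contunboundpower}, and the general continuity bound of Proposition~\ref{contqbigg1}. Since the corollary is advertised as a direct consequence, no new machinery should be needed; the work is essentially bookkeeping.

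First I would verify that $h \in \mathcal{L}^q$. The assumption $h(v) \le c \cdot h^{(s)}(v)$ for all $v \in [0,1]$ together with $h \ge 0$ (since $h$ is a distortion density) gives pointwise $|h(v)|^q \le c^q |h^{(s)}(v)|^q$, and hence by monotonicity of the Lebesgue integral
\[
\|h\|_q \;\le\; c \cdot \|h^{(s)}\|_q.
\]
From the computation performed inside the proof of Proposition~\ref{contunboundpower}, and using that $p > 1/s$ is equivalent to $q < 1/(1-s)$, that is $t := 1 + q(s-1) > 0$, one has
\[
\|h^{(s)}\|_q \;=\; \frac{s}{\sqrt[q]{1 + q(s-1)}},
\]
so in particular $\|h\|_q < \infty$, confirming $h \in \mathcal{L}^q$.

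Next I would invoke Proposition~\ref{contqbigg1}, whose hypotheses are now satisfied: $F$ and $G$ have finite $p$-moments by assumption, and $h \in \mathcal{L}^q$ by the step above. This yields
\[
|\pi_h(F) - \pi_h(G)| \;\le\; \|h\|_q \cdot WD_{p,d_1}(F,G) \;\le\; \frac{c \cdot s}{\sqrt[q]{1 + q(s-1)}} \cdot WD_{p,d_1}(F,G),
\]
which is exactly the claimed inequality.

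There is no real obstacle here, but the one subtle point worth double-checking is the equivalence $p > 1/s \iff q < 1/(1-s) \iff 1 + q(s-1) > 0$, which ensures that the root in the denominator is well-defined and positive; this is precisely why the moment condition $p > 1/s$ appears in the hypothesis. Everything else reduces to pointwise domination of $q$-th powers and the previously established propositions.
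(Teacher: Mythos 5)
Your proof is correct and follows the same route the paper intends: the corollary is stated as a direct consequence of Proposition~\ref{contunboundpower}, and you reconstruct exactly that reduction by bounding $\|h\|_q \le c\,\|h^{(s)}\|_q$ via pointwise domination and then reusing the explicit computation $\|h^{(s)}\|_q = s/\sqrt[q]{1+q(s-1)}$ together with Proposition~\ref{contqbigg1}.
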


\begin{corollary}[Convergence]\label{convdist}  If  $F, F_n$ for all $n\geq 1$ have finite uniformly bounded $p-$moments, $h\in \mathcal{L}^q$ and $WD_{p,d_1} (F_n,F) \rightarrow 0$ as $n\rightarrow \infty$, then
\[ \left| \pi_h(F) - \pi_h(F_n) \right| \xrightarrow[n \to \infty]{} 0,\]
where $p$ and $q$ are conjugates.
\end{corollary}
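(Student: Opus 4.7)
The statement is an immediate quantitative corollary of the continuity estimates already established, and the plan is essentially to invoke Proposition~\ref{contqbigg1} (or Proposition~\ref{contbound} in the boundary case). The key observation is that the right-hand side of the inequality
\[
|\pi_h(F) - \pi_h(F_n)| \leq \|h\|_q \cdot WD_{p,d_1}(F,F_n)
\]
factors into a fixed finite constant $\|h\|_q$ (finite by the hypothesis $h \in \mathcal{L}^q$) and a quantity tending to zero by hypothesis. So the proof is a one-line application once the assumptions of the continuity proposition are verified.

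First I would check that Proposition~\ref{contqbigg1} applies: it requires $F$ and $F_n$ to have finite $p$-moments, which is given (the uniform boundedness of moments is in fact stronger than needed here, but it also guarantees that passing to the limit in $WD_{p,d_1}$ is meaningful and that both sides of the inequality remain finite). Since $p$ and $q$ are conjugates, the hypothesis $h \in \mathcal{L}^q$ provides a finite constant $\|h\|_q$. Applying the proposition to the pair $(F,F_n)$ gives
\[
|\pi_h(F) - \pi_h(F_n)| \leq \|h\|_q \cdot WD_{p,d_1}(F,F_n),
\]
and letting $n \to \infty$, the right side vanishes by the convergence hypothesis.

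The only subtlety is the boundary case $p=1$, $q=\infty$, where one should substitute Proposition~\ref{contbound} in place of Proposition~\ref{contqbigg1}; the argument is identical with $\|h\|_\infty$ replacing $\|h\|_q$. Since there is no hard step here, I would present the proof as a one-paragraph direct application, perhaps with a brief remark that the uniform boundedness of the $p$-moments is not strictly necessary for the estimate itself but is natural in the context of convergence in $WD_{p,d_1}$.
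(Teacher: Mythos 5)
Your proof is correct and is precisely the intended argument: the paper states Corollary~\ref{convdist} without proof because it follows immediately from Proposition~\ref{contqbigg1} (or Proposition~\ref{contbound} when $q=\infty$), exactly as you describe. Your remark that uniform boundedness of the $p$-moments is stronger than needed for the pointwise estimate but natural for $WD_{p,d_1}$-convergence is a fair observation.
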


\begin{remark}  Corollary~\ref{convdist} holds  when the sequence of distributions are the empirical distributions $\widehat{F}_n$ defined on an i.i.d. sample of size $n$, $(x_1, \dots , x_n)$ from $X\sim F$. If $F$ has finite $p-$moments, then  $WD_{p,d_1} (\widehat{F}_n,F)  \xrightarrow[n \to \infty]{} 0$, hence $ \left| \pi_h(\widehat{F}_n) - \pi_h(F) \right| \xrightarrow[n \to \infty]{} 0$. This result follows by applying Lemma 4.1 in \cite{pflug2014}.
\end{remark}

Finally notice that, for continuity, the order of the Wasserstein distance $r$ coincides with the number of finite moments of $F$. 

\subsection{Partial coverage}

Many insurance contracts do not guarantee complete indemnity, but their payoff is just a part of the full damage. Such contracts include proportional insurance, deductibles and capped insurance. In general, there is a (monotonic) payoff function $T$ such that the payoff is $T(X)$, if the total loss is $X$. A quite flexible form is for instance the excess-of-loss insurance (XL-insurance), which has a payoff function
\begin{equation} 
T(x)=\left\{ \begin{array}{ll}
               0 & \qquad \hbox{ if } x \le a\\
               x-a & \qquad \hbox{ if } a \le x \le e \\
               e-a & \qquad \hbox{ if } x \ge e.
               \end{array}
               \right.
\end{equation}

Denote by $F^T$ the distribution of $T(X)$, if $F$ is the distribution of $X$. The distortion premium for partial coverage is $\pi_h(F^T)$. We study the relationship between $F^T$ and $G^T$ as well as between $\pi_h(F^T)$ and $\pi_h(G^T)$ in a slightly more general setup, namely for H\"older continuous $T$. Recall that $T$ is H\"older continuous with constant $H_\beta$, if $|T(x)-T(y)|\le H_\beta \cdot |x-y|^\beta$, for some $\beta\le 1$.

\begin{theorem}[Distance between the original and image probabilities by $T$]\label{wdimageT}  Let $P$ and $Q$ be two probability measures  and consider their image probabilities  under $T$ denoted by $P^T$ and $Q^T$, respectively. If $T$ is a $\beta-$H{\"o}lder continuous mapping, then
\[ WD_{r_\beta, d_1} (P^T, Q^T) \leq H_\beta \cdot WD_{r,d_1}^\beta(P,Q),  \]
for $r_\beta =\frac{r}{\beta}\geq 1$ and $r\geq1$, where  $H_\beta $ is the $\beta-$H{\"o}lder constant.
\end{theorem}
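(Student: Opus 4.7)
The plan is to exploit the coupling characterization of the Wasserstein distance: any coupling of $P$ and $Q$ yields a coupling of $P^T$ and $Q^T$ via the pushforward $(X,Y)\mapsto (T(X),T(Y))$, and the H\"older control on $T$ then bounds the cost of this derived coupling in terms of the cost of the original one.

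First I would fix an arbitrary joint distribution of a pair $(X,Y)$ with $X\sim P$ and $Y\sim Q$. By definition, $T(X)\sim P^T$ and $T(Y)\sim Q^T$, so $(T(X),T(Y))$ is an admissible coupling for $(P^T,Q^T)$. H\"older continuity gives pointwise $|T(X)-T(Y)|\le H_\beta\,|X-Y|^{\beta}$, and raising to the power $r_\beta=r/\beta$ and taking expectations yields
\[
\mathbb{E}\bigl(|T(X)-T(Y)|^{r_\beta}\bigr)\le H_\beta^{r_\beta}\,\mathbb{E}\bigl(|X-Y|^{\beta r_\beta}\bigr)=H_\beta^{r_\beta}\,\mathbb{E}\bigl(|X-Y|^{r}\bigr).
\]
Next I would take the $r_\beta$-th root (using $1/r_\beta=\beta/r$) to obtain
\[
\Bigl(\mathbb{E}|T(X)-T(Y)|^{r_\beta}\Bigr)^{1/r_\beta}\le H_\beta\,\Bigl(\mathbb{E}|X-Y|^{r}\Bigr)^{\beta/r}.
\]
The left-hand side is an upper bound for $WD_{r_\beta,d_1}(P^T,Q^T)$ because $(T(X),T(Y))$ is a valid coupling. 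Since $t\mapsto t^{\beta/r}$ is monotone increasing on $[0,\infty)$, passing to the infimum over all admissible couplings $(X,Y)$ of $(P,Q)$ on the right gives
\[
WD_{r_\beta,d_1}(P^T,Q^T)\le H_\beta\,\Bigl(\inf_{X\sim P,\,Y\sim Q}\mathbb{E}|X-Y|^{r}\Bigr)^{\beta/r}=H_\beta\cdot WD_{r,d_1}^{\beta}(P,Q),
\]
which is the desired inequality.

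There is no real obstacle here beyond bookkeeping of exponents; the only things that must be checked are that $r_\beta\ge 1$ (given in the hypothesis, so that $WD_{r_\beta,d_1}$ is actually a well-defined order) and that the identity $\beta r_\beta=r$ is used correctly to turn the Hölder bound into a clean expression in $WD_{r,d_1}(P,Q)$. Monotonicity of $T$ is not needed for this argument — only the H\"older estimate — so the result applies in the generality stated.
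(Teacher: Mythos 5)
Your argument is essentially the paper's own: push a coupling of $(P,Q)$ forward through $T$ to get an admissible coupling of $(P^T,Q^T)$, apply the pointwise H\"older bound, raise to the power $r_\beta$, take expectations, and extract the $r_\beta$-th root. The only cosmetic difference is that you work with an arbitrary coupling and pass to the infimum at the end, whereas the paper starts directly from an optimal coupling; your version sidesteps the (minor) issue of whether the infimum is attained, but otherwise the two proofs coincide.
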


\begin{proof} Let the joint distribution of $X$ and $Y$ such that
$$WD_{r,d_1}(X,Y) = \mathbb{E}^{1/r}(|X-Y|^r),$$
then
\begin{align*}
 WD^{ r_\beta }_{r_\beta, d_1} (P^T, Q^T) &\le \mathbb{E}(|T(X) - T(Y)|^{r_\beta}) \\
 &\le H_\beta^{r_\beta} \cdot \mathbb{E}(|X - Y|^r)  =  H_\beta^{r_\beta} \cdot   WD^{r}_{r,d_1} (P,Q).
 \end{align*}
Taking the $r_\beta$ root on both sides finished the proof.
\end{proof}
For the XL-insurance, the H\"older-constant is a Lipschitz constant ($\beta=1$) and has the value 1.

From the previous Theorem we can conclude that, if two probabilities are close, then the image probabilities by a mapping $T$ with the characteristics of Theorem~\ref{wdimageT}, are close in Wasserstein distance as well. Theorem~\ref{wdimageT} isolates the argument also used in Theorem 3.31 in \cite{pflug2014}. Note that the underlying distances for the Wasserstein distances are the metrics of the respective spaces.

\begin{corollary} Let $F,G$ be two distributions defined by the probabilities $P$ and $Q$, respectively, and $F^T, G^T$ be their image distributions by $T$, respectively. If $T$ is a $\beta-$H{\"o}lder continuous mapping with constant $H_\beta$,  $h\in \mathcal{L}^q$, the distributions $F^T$, $G^T$ with finite $p-$moments, then for all $r=p\cdot \beta$ ($r\geq 1$), the distortion premium   with payment function $T$ satisfies
\begin{equation} 
| \pi_{h}(F^T) - \pi_{h}(G^T) |\leq || h||_q \cdot WD_{p,d_1}(P^T,Q^T)\leq || h||_q \cdot H_\beta\cdot WD_{r,d_1}^{\beta}(P,Q).
\end{equation}
\end{corollary}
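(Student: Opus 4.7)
The plan is to chain together two results already available in the paper: Proposition~\ref{contqbigg1} for the first inequality, and Theorem~\ref{wdimageT} for the second.

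For the first inequality, I would simply instantiate Proposition~\ref{contqbigg1} with the distribution functions $F^T$ and $G^T$ in place of $F$ and $G$. The hypotheses of that proposition apply verbatim: $F^T$ and $G^T$ have finite $p$-moments (this is assumed in the statement), and $h \in \mathcal{L}^q$ with $p,q$ conjugate. The conclusion is exactly $|\pi_h(F^T) - \pi_h(G^T)| \leq \|h\|_q \cdot WD_{p,d_1}(F^T,G^T)$, which gives the first inequality after identifying $WD_{p,d_1}(F^T,G^T)$ with $WD_{p,d_1}(P^T,Q^T)$ (since $F^T$ and $G^T$ are just the one-dimensional laws of $P^T$ and $Q^T$).

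For the second inequality, I would apply Theorem~\ref{wdimageT} with $r_\beta = p$, which forces $r = p\beta$; the assumption $r \geq 1$ is exactly the condition stated in the corollary. This yields
\[
WD_{p,d_1}(P^T,Q^T) \leq H_\beta \cdot WD_{p\beta,d_1}^{\beta}(P,Q) = H_\beta \cdot WD_{r,d_1}^{\beta}(P,Q),
\]
and concatenating with the first inequality gives the stated bound.

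There is no genuine obstacle here; the content is entirely a matter of selecting the right parameters so that the two prior results compose cleanly. The only minor point worth spelling out in the write-up is the index matching $r_\beta = p \Leftrightarrow r = p\beta$, together with checking that $r \geq 1$ (which is the hypothesis imposed in the corollary) is what guarantees the applicability of Theorem~\ref{wdimageT}. Once these are in place the two inequalities follow immediately and can be combined in a single display.
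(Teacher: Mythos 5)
Your proposal is correct and is exactly the argument the paper intends: the corollary is stated without proof precisely because it is the concatenation of Proposition~\ref{contqbigg1} (applied to $F^T,G^T$) with Theorem~\ref{wdimageT} (with $r_\beta=p$, hence $r=p\beta$). Your observation that the hypothesis $r\geq 1$ in the corollary is what licenses Theorem~\ref{wdimageT} is the right point to make explicit.
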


We proceed now to study sensitivity properties of the distortion premium w.r.t. the distortion density.

\section{Continuity of the premium w.r.t. the distortion density}

Previously, we studied the mapping $F \mapsto \pi_h(F)$ for fixed $h$. In this section, we consider and present properties of the mapping $h \mapsto \pi_h(F)$ for fixed $F$. Different sensitivity properties w.r.t. the distortion parameters were studied in Gourieroux and Liu 2006~\cite{gourieroux2006}.  

\begin{proposition}[Continuity of the distortion premium  w.r.t. the distortion density $h$]\label{contdistfixh} Let $F$ be a distribution and consider two different distortion densities $h_1, \, h_2$. If $F$ has finite $p-$moments and $h_1, h_2\in \mathcal{L}^q$, then
\[
\left| \pi_{h_1}(F) - \pi_{h_2}(F) \right|\leq  ||F^{-1}||_p \cdot  || h_1-h_2 ||_q ,
\]
where $p$ and $q$ are conjugates. Here the choices $p=1$, $q=\infty$ and $p=\infty$, $q=1$ are included.
\end{proposition}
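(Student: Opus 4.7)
The plan is a direct parallel to Proposition~\ref{contqbigg1}, but now holding $F$ fixed and varying $h$ instead of the other way around. The key observation is that, by the integral representation
\[
\pi_h(F) = \int_0^1 F^{-1}(v)\, h(v)\, dv,
\]
the mapping $h \mapsto \pi_h(F)$ is linear in $h$. Consequently,
\[
\pi_{h_1}(F) - \pi_{h_2}(F) = \int_0^1 F^{-1}(v)\,\bigl(h_1(v) - h_2(v)\bigr)\, dv,
\]
which reduces the problem to estimating an integral of a product of two functions on $[0,1]$.

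The second step is to apply H\"older's inequality with conjugate exponents $p$ and $q$ to this integral:
\[
\left|\pi_{h_1}(F) - \pi_{h_2}(F)\right| \le \left(\int_0^1 |F^{-1}(v)|^p\, dv\right)^{1/p} \left(\int_0^1 |h_1(v) - h_2(v)|^q\, dv\right)^{1/q} = \|F^{-1}\|_p \cdot \|h_1 - h_2\|_q.
\]
The first factor is finite precisely because $F$ has finite $p$-th moment (since $\|F^{-1}\|_p^p = \int_0^1 |F^{-1}(v)|^p\, dv = \mathbb{E}(|X|^p)$), and the second is finite since $h_1, h_2 \in \mathcal{L}^q$ implies $h_1 - h_2 \in \mathcal{L}^q$ by the triangle inequality.

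For the boundary cases $(p,q) = (1,\infty)$ and $(p,q) = (\infty,1)$, one invokes the standard extension of H\"older's inequality involving the essential supremum norm; the inequality takes the form $\int |fg|\, dv \le \|f\|_1 \|g\|_\infty$, which covers both corners. There is really no obstacle here: the result is essentially a one-line consequence of linearity of $\pi_h$ in $h$ combined with H\"older. The only minor subtlety worth noting is making explicit that the finiteness hypotheses on the moments of $F$ and on the $q$-norm of $h_1, h_2$ are exactly what is needed for both sides to be finite, and that the absolute value on the left is controlled by moving it inside the integral before applying H\"older.
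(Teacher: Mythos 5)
Your proof is correct and matches the paper's argument, which is simply ``Use H\"older's inequality and the result is direct.'' You have filled in the same one-line H\"older computation the paper intends, using linearity of $h \mapsto \pi_h(F)$ and the conjugate exponents, including the boundary cases.
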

\begin{proof} Use  H\"older inequality and the result is direct.
\end{proof}

We can conclude that, if $h_1$ and $h_2$ are close, then also the premium prices are close. However, $h$ is always identifiable by the following Proposition.

\begin{proposition} If $\pi_{h_1}(F) = \pi_{h_2}(F)$ for all distribution functions $F$ (the value $\infty$ is not excluded), then
$$h_1(v) = h_2(v) \mbox{   a.s.}$$
\end{proposition}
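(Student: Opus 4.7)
The plan is to test the identity $\pi_{h_1}(F) = \pi_{h_2}(F)$ against a well-chosen parametric family of distributions $F$ rich enough to pin down the distortion distribution $H(u) = \int_0^u h(v)\,dv$ pointwise.

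Writing the premium in the quantile representation $\pi_h(F) = \int_0^1 F^{-1}(v)\,h(v)\,dv$, the hypothesis rearranges to
\begin{equation*}
\int_0^1 F^{-1}(v)\bigl(h_1(v)-h_2(v)\bigr)\,dv = 0 \quad \text{for every distribution } F.
\end{equation*}
The quantile functions $\{F^{-1} : F\text{ a distribution}\}$ are exactly the nondecreasing functions on $(0,1)$, so this says $h_1-h_2$ is orthogonal (in the $L^2$-pairing sense) to every monotone test function. To extract $h_1=h_2$ a.e., it suffices to exhaust a generating family of indicator-type quantiles.

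The concrete choice I would make is the two-point family $F_\alpha = \alpha\delta_0 + (1-\alpha)\delta_1$ for $\alpha \in [0,1)$, whose quantile function is $F_\alpha^{-1}(v) = \mathbbm{1}_{\{v > \alpha\}}$. Substituting gives
\begin{equation*}
\pi_{h_i}(F_\alpha) \;=\; \int_\alpha^1 h_i(v)\,dv \;=\; 1 - H_i(\alpha), \qquad i=1,2.
\end{equation*}
Equality of the premia for every such $\alpha$ therefore forces $H_1(\alpha) = H_2(\alpha)$ on $[0,1)$, and since $H_i(1)=1$ by normalization, equality extends to all of $[0,1]$.

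From $H_1 \equiv H_2$ one concludes $h_1 = h_2$ almost everywhere on $[0,1]$, which is the desired statement. There is no real obstacle here: the only mild point to note is that the two-point distributions lie in the domain of $\pi_h$ (the integrals are trivially finite regardless of integrability assumptions on $h$), so the premium identity may be applied to them without caveat, and the value of $F_\alpha^{-1}$ at the single point $v=\alpha$ is irrelevant for the Lebesgue integral.
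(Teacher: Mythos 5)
Your proof is correct and follows essentially the same approach as the paper: both test the premium identity on the two-point (Bernoulli-type) distributions $\alpha\delta_0 + (1-\alpha)\delta_1$, whose quantile function is the indicator of $(\alpha,1]$, to conclude $H_1 = H_2$ and hence $h_1 = h_2$ a.e. The only superficial difference is notational.
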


\begin{proof}
Let $F_a$ be the distribution which takes the value 0 with probability $a$ and the value 1 with probability $1-a$, for some $a\in (0,1)$, then its inverse $F_a^{-1}$ is the
indicator function of the interval $[a,1]$. Hence,
\[
\pi_{h_1}(F_a) =  \int \mathds{1}_{[a,1]}(v) \, h_1(v) \, dv = \int_a^1 h_1(v) \, dv =  \pi_{h_2}(F_a) =\int_a^1 h_2(v) \, dv.
\]
Thus, the distortion distributions $H_1$ and $H_2$ are equal and therefore $h_1 = h_2$ almost surely.
\end{proof}

\begin{remark} Note the previous proposition is true if the family of distributions where the premium prices coincide contains all the Bernoulli variables. Compare also  Theorem 2 in~\cite{wangetal1997}.
\end{remark}

\begin{remark}
Another family with the property that the premium prices for this family determine the distortion in a unique manner is the family of Power distributions of the form $F_{\gamma}(u)=u^\gamma$ on $[0,1]$ and more general of the form $F_{\gamma, \beta}(u)=\beta^{-\gamma}u^\gamma$ on $[0, \beta]$. The distortion premium prices for this family are
$$\int_0^1 \beta\,  v^{1/\gamma} \, h(v) \, dv, $$
and the uniqueness of $h$ and $\beta$ is obtained since 
\[
\beta =  \lim_{\gamma \rightarrow \infty } \int_0^1 \beta v^{1/\gamma} \, h(v) \, dv ,
\]
and the inversion formula for the Mellin transform (see Zwillinger 2002~\cite{zwillinger2002}).
\end{remark}

\section{Estimating the distortion density from observations}

The way how insurance companies calculate a premium is typically not revealed to the customer. Notice that risk premia appear not only in the insurance business, see the link of insurance premium prices and asset pricing in Nguyen et al. 2012~\cite{nguyen2012}. Risk premia appears in other areas such as

\begin{itemize}
\item {\bf power future markets.} A future contract fixes the price today for delivery of energy later. There is the risk of price changes between now and the delivery period. Thus, such a contract has the character of an insurance and the pricing principles apply, although the price is found in exchange markets (e.g. electricity future markets).
\item {\bf exotic options.} While standard options are priced through a replication strategy argument, this argument does not apply for other types of options and these options have the character of insurance contracts. Pricing of such contracts is often done over the counter, but again the pricing principle is not revealed to the counterparty.
\item {\bf credit derivatives.} Also these contracts carry the character of insurance and can be priced according to insurance price principles.
\end{itemize}

In this section we assume that we know the distortion premium prices of $m$ contracts, which are all priced with the same distortion density $h$. For each contract $j$, we also have a sample $x_1^{(j)}, \dots, x_n^{(j)}$ of size $n$ drawn from the loss distribution of this contract at our disposal. For simplicity we assume that $n$ is the same for all contracts, but this is not crucial.

The goal of this section is to show how the distortion density $h$ can be regained from the observations of the insurance prices, which would help us to shed more light on the price formation of contract counterparties. Notice that our aim is not to estimate the distortion premium prices from empirical data as is done in Gourieroux and Liu 2006~\cite{gourieroux2006} or Tsukahara 2013~\cite{tsukahara2013}.  \\

\textbf{A simulation example.}
As an example we consider $m$ {\em different} loss distributions, all of Gamma type. From each distribution, we obtain a sample of size $n$. For each sample, we calculate the $\AVaR$ and power distortion premium prices. Based on the prices obtained and our samples, we aim to recover the distortion density $h$.
We denote the ordered sample from the $j$-th loss distribution by $x^{(j)}_{[1]} , \dots , x^{(j)}_{[n]}$.  The distortion  premium, with distortion density $h$ for each sample $j=1, \dots , m$, is
\begin{equation}\label{pihdirect}
\pi^{(j)} = \sum_{i=1}^n x^{(j)}_{[i]}\int_{\frac{i-1}{n}}^{\frac{i}{n}} h(v)\: dv =\sum_{i=1}^n x^{(j)}_{[i]} \left( H\left( \frac{i}{n}\right) - H\left(\frac{i-1}{n} \right) \right)  .
\end{equation}

On the following, we develop (\ref{pihdirect}) for the particular cases of   $\AVaR$ and power distortion premium prices for each sample $j=1, \dots , m$.

\textbf{$\AVaR$ distortion premium.} The price for  $h_{ \alpha}$ defined on (\ref{ghcte}) is
\begin{equation}\label{directpiaveru}
\pi^{(j)}= \frac{1}{n\, (1-\alpha)} \cdot  \sum_{i=i_a}^n x^{(j)}_{[i]},
\end{equation}

where $1<i_\alpha<n$ s.t. $ \frac{i_\alpha -1}{n}\leq \alpha< \frac{i_\alpha}{n}$. \\

\textbf{Power distortion premium.} The price given by the power distortion $h^{(s)}$ defined in (\ref{ghforsles1})   with $0<s<1$ is
\begin{equation}\label{directpisles1}
\pi^{(j)} =  \sum_{i=1}^n x^{(j)}_{[i]}\cdot  \left(\left(1-\frac{i-1}{n}\right)^s  - \left( 1- \frac{i}{n}\right)^s  \right),
\end{equation}

and the price given by  $h^{(s)}$  defined in (\ref{ghforsbig1})  with $s\geq 1$ is
\begin{equation}\label{directpisbigg1}
\pi^{(j)}  =  \sum_{i=1}^n  x^{(j)}_{[i]} \cdot  \left(\left( \frac{i}{n}\right)^s - \left( \frac{i-1}{n}\right)^s  \right).
\end{equation}
The inverse problem consists on estimating the distortion density $h$ from  observed prices. Recall that among the examples we presented of common distortion densities we had step functions and continuous functions,  therefore we will use step and spline functions in order to estimate estimate  $h$. We do so  for  the  prices obtained in (\ref{directpiaveru}), (\ref{directpisles1}) and  (\ref{directpisbigg1}).

\subsection{Estimation of the distortion density with a step function}

\textbf{Distortion density as a step function.} Let $\widehat{h}^1_l$ denote the step function consisting of $l$ equal-size steps, defined as
\begin{equation} 
\widehat{h}^1_l(v)= \sum_{k=1}^l \lambda_k \cdot I_{\left[L\cdot  \frac{k-1}{n}, L\cdot  \frac{k}{n}\right) }(v)=\sum_{k=1}^l \lambda_k \cdot I_{[\frac{k-1}{l},\frac{k}{l}) }(v),
\end{equation}

where $L= n/l$,  $ \lambda_s\in \mathbb{R}$ for $k=1,\dots, l$ and  $l$ denotes the dimension of the step function space. We also impose 
\begin{equation} 
 \int_0^1 \widehat{h}^1_l (v) \: dv=\sum_{k=1}^l \int_{ \frac{k-1}{l}}^{ \frac{k}{l}} \lambda_k \:dv =  \frac{1}{l}\cdot \sum_{k=1}^l \lambda_k = 1,
\end{equation}
with  $0\leq \lambda _1\leq \cdots \leq \lambda_l. $ In this way,  $\widehat{h}^1_l$ fulfils the density constraints as well as the non-decreasing constraints.\\

\textbf{Prices with the step function.} For each sample $j=1, \dots , m$, the prices with  $\widehat{h}^1_l$ are
\begin{equation}\label{pihatstepj}
 \widehat{\pi}^{(j)} = \sum_{i=1}^n x_{[i]}^{(j)} \cdot \int_{\frac{i-1}{n}}^{\frac{i}{n}} \widehat{h}^1_l (v) \: dv =\sum_{k=1}^l \sum_{i=k}^{L\cdot k} x_{[i]}^{(j)} \cdot  \int_{\frac{i-1}{n}}^{\frac{i}{n}} \lambda_k \: dv = \sum_{k=1}^l   \frac{\lambda_k}{n} \cdot \sum_{i=k}^{L\cdot k} x_{[i]}^{(j)} ,
\end{equation}

\textbf{Estimation.}  In order to estimate $\widehat{h}^1_l$ we will minimize the squares of the differences between the prices obtained by a distortion function $h$ and the premium obtained by $\widehat{h}^1_l$ in  (\ref{pihatstepj}). We will test our results with the given prices   $\pi^{(j)}$  calculated  in (\ref{directpiaveru}), (\ref{directpisles1}) and (\ref{directpisbigg1}). We solve, 
\[
\begin{aligned}
& \underset{\lambda}{\min}
& & \sum_{j=1}^m (\hat{\pi}^{(j)}-\pi^{(j)})^2 \\
& \text{s.t.} & &  \frac{1}{l}\cdot \sum_{i=1}^l \lambda_i = 1 \\
& & &   0\leq \lambda _1\leq \cdots \leq \lambda_l.
\end{aligned} \tag{$P_1$}\label{invphd}
\]

\subsection{Estimation of the distortion density with a cubic monotone spline}

\textbf{B-splines construction.} For our purposes we will define the splines on the interval $[0,1]$. Any B-spline is a linear combinations of the B-spline basis functions. The B-spline basis functions have all the same degree, $b$ and we choose to define them  at equally spaced knots  $t_k=k/L$, for $k=0, \dots , L$, hence $L$ subintervals. The functions for this basis  are denoted as $B_{k,b}$ and constructed following a recursion formula. The B-spline basis function of degree $0$ is denoted and defined as
\[ B_{k,0}(v)=\begin{cases}
1 & t_k \leq v \leq t_{k+1} \\
0 & \text{otherwise.}
\end{cases} \]
The B-spline basis functions of degree $b$, $B_{k,b}$ are obtained as an interpolation between  $B_{k,b-1}$ and $B_{k+1,b-1}$,  following the recursion formula
\[ B_{k,b}(v) = \frac{v-t_k}{t_{k+b} -t_k} B_{k, b-1}(v)  + \frac{t_{k+b+1}- v}{t_{k+b+1} -t_{k+1}} B_{k+1, b-1}(v).\]

 In the recursion we need to define fake knots $t_{-k}=0$ and $t_{L+k}=1$ for $k=1, \dots ,b$. In our case, we  consider splines of degree $b=2$.  If we divide $[0,1]$ in $L$ equally sized intervals, the basis has  $L+2 $ functions
\begin{equation}\label{Bbase2L}
\{ B_{-2,2}, B_{-1,2}, B_{0,2}, B_{1,2},\dots ,B_{L-1,2} \}.
\end{equation}
Notice that all the elements of the basis can be obtained by translating the B-spline basis function $B_{0,2}$ defined on the first $b+2=4$ knots. In order to have a base of increasing  monotone cubic  splines we integrate the functions of (\ref{Bbase2L}) and obtain a new base
\begin{equation}\label{Sbase2L}
\{S_{-2},S_{-1},S_0,\dots , S_{L-1}\},
\end{equation}
where $S_k(v)= \int_0^v B_{k,2}(w)\, dw$ for all $k=-2,\dots , L-1$. We scale the functions of (\ref{Bbase2L}) so the  splines in (\ref{Sbase2L}) are  distribution functions. Note that no linear combination of (\ref{Sbase2L}) gives us a constant function, due to construction of  (\ref{Sbase2L}). Therefore, we need one element to our base, say $S_{L}(v)=c$ and hence
\begin{equation}\label{Scbase2L}
\{S_{-2},S_{-1},S_0,\dots , S_{L-1}, S_L\},
\end{equation}
is our final base with $l=L+3$ elements, where $l$  denotes its dimension.\\

As an example we illustrate the base obtained for $L=5$. Starting with $B_{0,2}$  defined on  $t_0=0, t_1=1/5, t_2=2/5, t_3=3/5$, precisely
$$
B_{0,2}(v) =  \frac{5^3}{2} \cdot \left(  v^2   \mathds{1}_{[t_0,t_1)} + \left( v(t_2-v)+(t_3-v)(v-t_1) \right)   \mathds{1}_{[t_1,t_2)} + (t_3-v)^2  \mathds{1}_{[t_2,t_3)}\right)
$$
 We denote by $S_0$ the distribution of $B_{0,2}$ and obtain  the rest of the monotone cubic splines   by translating $S_0$. The basis of cubic monotone splines of dimension $l=8$, illustrated in Figure~\ref{splines}, is denoted as
\begin{equation}\label{basisSL5}
\{S_{-2},S_{-1},S_0,\dots , S_4, S_5\},
\end{equation}
where $S_k(v)=S_0(v-k/5)$ for $k=-2, \dots , 4$ and $S_{5}(v)=c$.

\begin{center}
\includegraphics[scale=0.44]{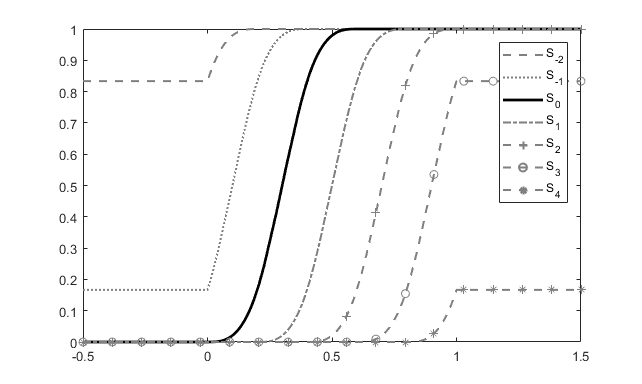}\captionof{figure}{Cubic increasing monotonic base functions.}\label{splines}
\end{center}
Any linear combination with positive scalars of the splines in (\ref{basisSL5}) define a spline which is an increasing and positive function. \\

\textbf{Distortion density as a spline.} Let $\widehat{h}^2_l(v)$ denote an increasing monotone cubic density defined as a linear combination of  $l=L+3$ splines in (\ref{Scbase2L})
\begin{equation} 
\widehat{h}^2_l(v) = \sum_{k=-2}^{L} \lambda_k\cdot  S_k(v),
\end{equation}
where $\lambda_k\geq 0$ for all $k=-2,\dots , L$. Notice that by setting the scalars to be non-negative, $\widehat{h}^2_l $ is increasing. However,   $\widehat{h}^2_l$ must integrate to $1$ on $[0, 1]$, hence
\[\int_0^1 \widehat{h}^2_l(v) \: dv=  \sum_{k=-2}^{L} \lambda_k\cdot \int_0^1 S_k(v)\:dv  =\sum_{k=-2}^{L} \lambda_k\cdot  \left( \sum_{i=1}^nA_{ik}\right) = \sum_{k=-2}^{L} \lambda_k\cdot a_k=1,\]
where
\begin{equation}\label{Aikandsum}
A_{ik}= \int _{\frac{i-1}{n}}^{\frac{i}{n}}S_k(v)\:dv, \quad  a_k=\sum_{i=1}^n A_{ik}.
\end{equation}

\textbf{Prices with the spline.} For each sample $j=1,\dots , m$,  the prices with $\widehat{h}^2_l$ are
\begin{equation}\label{pihatsplinej}
\widehat{\pi}^{(j)} = \sum_{i=1}^n x_{[i]}^{(j)} \cdot \int_{\frac{i-1}{n}}^{\frac{i}{n}} \widehat{h}^2_l (v) \: dv  = \sum_{i=1}^n x_{[i]}^{(j)} \cdot \left( \sum_{k=-2}^{L} \lambda_k \, A_{ik} \right).
\end{equation}

\textbf{Estimation.} Given prices   $\pi^{(j)}$ calculated as in (\ref{directpiaveru}), (\ref{directpisles1}) or (\ref{directpisbigg1}) and the prices calculated in (\ref{pihatsplinej}) for every sample $j=1,\dots , m$, we solve
\[
\begin{aligned}
& \underset{\lambda}{\min}
& &  \sum_{j=1}^m (\hat{\pi}^{(j)}-\pi^{(j)})^2 \\
& \text{s.t.} & & \sum_{k=-2}^{L} \lambda_k\cdot a_k=1 \\
& & &   \lambda_k \geq 0,\: k=-2,\dots , {L},
\end{aligned} \tag{$P_2$}\label{invphd2}
\]
where $a_k$ is defined in (\ref{Aikandsum}). \\
The estimations obtained by solving  (\ref{invphd}) and (\ref{invphd2}) are presented below.

{\bf{$\AVaR$ distortion premium.}} We consider particular cases of $h_\alpha$ for $\alpha=0.9, 0.95$. We estimate the distortion density for each of the cases,  with  two different step functions, corresponding to $l=8, 10$ steps, and two  different spline basis functions of dimensions $l=8, 13$, respectively. \\

\textbf{Step function.} The estimated step distortions $\widehat{h}_l$ for $l=8, 10$ are obtained by solving (\ref{invphd})  and illustrated below.


\begin{center}
\includegraphics[scale=0.4]{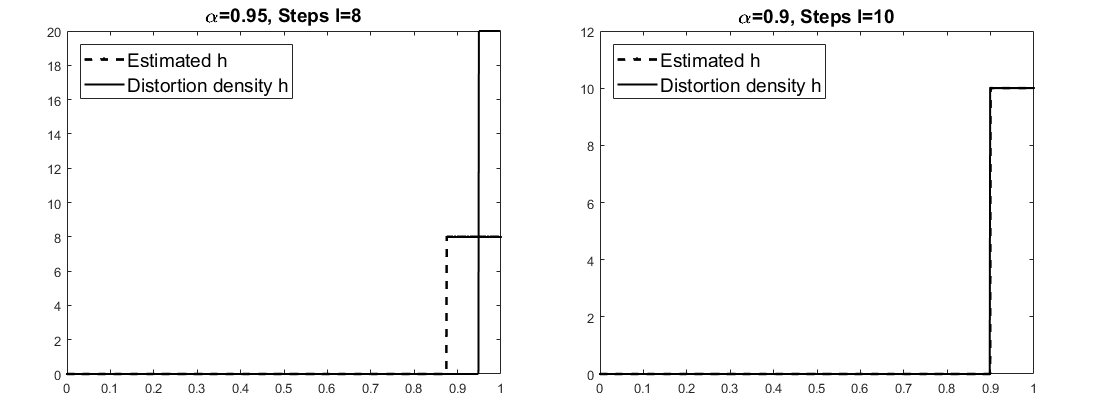}
\captionof{figure}{The true distortion density  $h_{{\alpha}}$ for $\alpha=0.9, 0.95$ and their respective step functions estimators for  $l=8$ steps, and $l=10$ steps.}
\end{center}

\textbf{Splines.} The estimated spline distortions $\widehat{h}^2_l$ for $l=8,13$ are obtained by solving (\ref{invphd2}) and illustrated below.


\begin{center}
\includegraphics[scale=0.4]{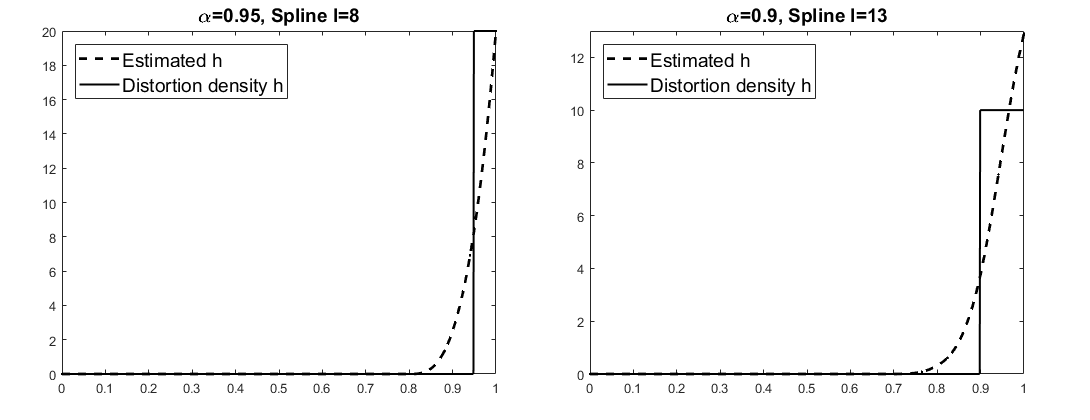}\captionof{figure}{The true distortion density  $h_{{\alpha}}$ for $\alpha=0.9, 0.95$ and their respective spline estimators for $l=8$ and $l=13$ spline base dimension.}
\end{center}

{\bf{Power distortion premium.}} For this case we consider$h^{(s)}$ for $s=0.8, 3$. We  solve (\ref{invphd})  and (\ref{invphd2}) with the same number of steps and number of spline basis functions as before.\\

\textbf{Step function.}  The estimated step distortions  $\widehat{h}^1_l$  for  $l=8,10$ are obtained by solving (\ref{invphd})
and illustrated below.

\begin{center}
\includegraphics[scale=0.4]{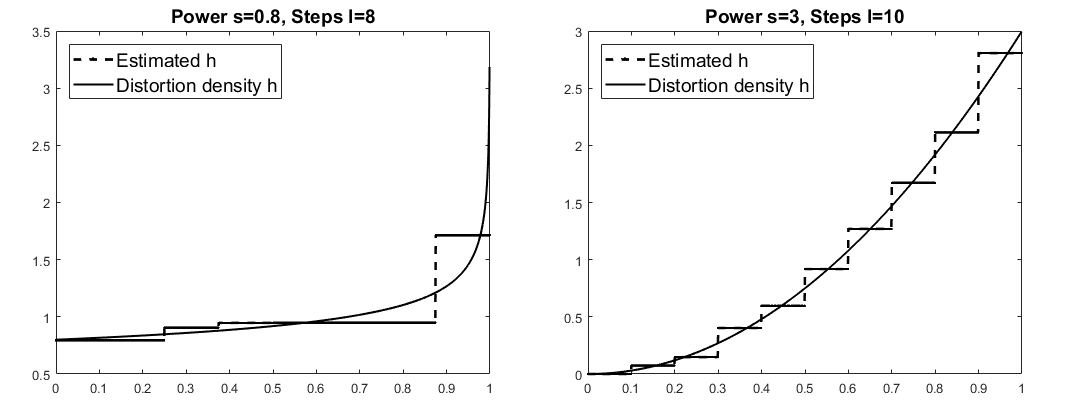}\captionof{figure}{The true distortion density  $h^{(s)}$ for $s=0.8,3$ and their respective estimated step distortions with $l=8,10$ steps.}
\end{center}

\textbf{Splines.} The estimated spline distortions $\widehat{h}^1_l$ for  $l=8,13$  are obtained by solving (\ref{invphd2}) and illustrated below.

\begin{center}
\includegraphics[scale=0.4]{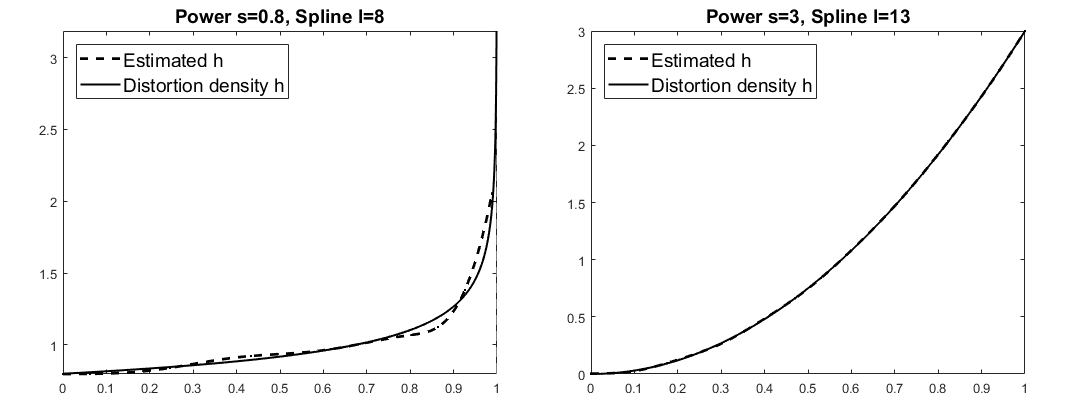}\captionof{figure}{The true distortion density  $h^{(s)}$ for $s=0.8,3$ and their respective estimated spline distortions with $l=8,13$ spline base dimension.}
\end{center}

The optimal values of the optimization problems for all the cases can be seen in the following table.
\begin{table}[h!]
\centering
\label{my-label}
\begin{tabular}{l|l|l}
   $\AVaR$               & $\alpha =0.9$ & $\alpha =0.95$ \\ \hline
\multirow{3}{*}{} Step $l=8$                            &  7.3248 &  107.1562 \\ 
\multirow{3}{*}{} Step $l=10$                   & 0         &  58.4835 \\ 
\multirow{3}{*}{} Spline $l=8$        &  0.0322  & 13.0785 \\
\multirow{3}{*}{} Spline $l=13$        &  0.0154  &  0.0502 \\ 
\end{tabular}
\begin{tabular}{l|l|l}
   Power               & $s =0.8$ & $s =3$ \\ \hline
\multirow{3}{*}{} Step $l=8$                            &   0.0012   &  1.1466e-04 \\ 
\multirow{3}{*}{} Step $l=10$                   & 0            & 5.1772e-05 \\ 
\multirow{3}{*}{} Spline $l=8$        &  3.6976e-04  & 0 \\
\multirow{3}{*}{} Spline $l=13$        &  1.3251e-04  & 0 \\ 
\end{tabular}\caption{Optimal values of the problems (\ref{invphd}) and (\ref{invphd2}) for the $\AVaR-$distortion and the power distortion.}
\end{table}

\section{Ambiguity}\label{sec:6}

In this section we combine the distortion premium with the ambiguity principle. Such an approach allows us to incorporate model uncertainty into the premium. Recall that, by setting the distortion density to $h=1$, we would price just with the ambiguity principle. As was mentioned in Section~\ref{sec:1}, distances can be used to define ambiguity sets. Here, closed Wasserstein balls will serve as ambiguity sets. These sets will be centred at $F$, an initial distribution, that we refer to as our baseline model.


\begin{definition}[Robust distortion premium under Wasserstein balls with $d_1$]\label{wassballs} Let $F$ be the baseline loss distribution, $h$ a distortion density. The robust distorted price of order $r\geq 1$ is
\begin{equation}\tag{P-r}\label{P-r}
\pi^\epsilon_{h,r,d_1} (F)=\sup \{ \pi_h(G) : G\in\mathcal{B}_{r,d_1}(F,\epsilon) \} ,
\end{equation}
where $\mathcal{B}_{r,d_1}(F,\epsilon) =\{G:\, WD_{r,d_1}(G, F)\leq \epsilon\}.$ We call the worst case distribution and denote it by $F^*$ if $F^* \in \mathcal{B}_{r,d_1}(F,\epsilon) $ and is such that 
$$\pi^\epsilon_{h,r,d_1} (F) =  \pi_h(F^*).$$
\end{definition}

\begin{remark}\label{orderwd}
Notice that for $r_1 \le r_2$
\begin{equation} 
WD_{r_1,d_1} \le WD_{r_2,d_1},
\end{equation}
thus $\mathcal{B}_{r_1,d_1} \supseteq  \mathcal{B}_{r_2,d_1}$.
\end{remark}

We can say more about the value and solution of (\ref{P-r})  if we choose $r=p$. We start with bounded distortion densities, i.e. for $p=1$ and $q=\infty$.

\begin{proposition}[Characterization of the worst case distribution  for $\mathbf{r\geq p=1}$]\label{boundwd1} Let the baseline distribution $F$ have its first moment finite.
\begin{itemize}
\item[(i)] If $h$ is unbounded, then  (\ref{P-r}) for $r = 1$ is unbounded.
\item[(ii)] If $h$ is bounded with $\sup_v h(v) = \|h\|_\infty$, then   (\ref{P-r}) is bounded for all $r\geq 1$. If $r=1$, the optimal value of (\ref{P-r})  is
$$\pi^\epsilon_{h,1,d_1}(F) = \pi_h(F) + \epsilon \cdot \|h\|_\infty.$$
We interpret the additional term $\epsilon \cdot \|h\|_\infty$ as the {\em ambiguity premium}.
For the worst case distribution,
\begin{itemize}
\item if $h(v) = \|h\|_\infty $ for  $v \ge 1-\eta$ and $0<\eta\leq 1$, then the supremum is attained at
$$F_\eta^*(x) = \left\{ \begin{array}{ll}
F(x)                                      & \qquad \qquad x < F^{-1}(1-\eta),\\
1-\eta                                  & \qquad \qquad F^{-1}(1-\eta)\le x < F^{-1}(1-\eta) +  \epsilon/ \eta ,\\
F\left(x- \epsilon/ \eta  \right)   & \qquad \qquad  x \geq  F^{-1}(1-\eta) + \epsilon/ \eta .
\end{array}
\right.
$$
\item Otherwise, the supremum is not attained, but can be approximated by the sequence $F^*_{1/n}(x)$,  $\forall n\in\mathbb{N}$.
\end{itemize}

\end{itemize}
\end{proposition}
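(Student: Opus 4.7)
The plan is to rewrite both the objective and the constraint in quantile coordinates. Using $\pi_h(G) = \int_0^1 G^{-1}(v)\, h(v)\, dv$ and $WD_{1,d_1}(F,G) = \int_0^1 |F^{-1}(v) - G^{-1}(v)|\, dv$, problem~(\ref{P-r}) for $r=1$ becomes
\begin{equation*}
\sup\left\{ \int_0^1 G^{-1}(v)\, h(v)\, dv \;:\; \int_0^1 |G^{-1}(v) - F^{-1}(v)|\, dv \leq \epsilon \right\},
\end{equation*}
taken over nondecreasing right-continuous candidates $G^{-1}$. Since $g$ is concave, $h(v) = g'(1-v)$ is nondecreasing, so shifting mass into the upper tail simultaneously increases the objective and preserves the monotonicity constraint for free.

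For part (i), I would plug the candidate $F^*_\eta$ defined in the statement into the calculation directly. Its quantile function is $F^{-1}(v) + (\epsilon/\eta)\,\mathds{1}_{v > 1-\eta}$, which is nondecreasing, and elementary computation gives $WD_{1,d_1}(F, F^*_\eta) = \epsilon$ and $\pi_h(F^*_\eta) - \pi_h(F) = (\epsilon/\eta)\int_{1-\eta}^1 h(v)\, dv$. By monotonicity of $h$, this is at least $\epsilon \cdot h(1-\eta)$, which diverges as $\eta \to 0^+$ when $h$ is unbounded, proving unboundedness of (\ref{P-r}) for $r = 1$.

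For part (ii), the upper bound follows from $h \geq 0$ combined with a pointwise bound:
\begin{equation*}
\pi_h(G) - \pi_h(F) \;\leq\; \|h\|_\infty \int_0^1 |G^{-1}(v) - F^{-1}(v)|\, dv \;\leq\; \epsilon\, \|h\|_\infty,
\end{equation*}
so $\pi^\epsilon_{h,1,d_1}(F) \leq \pi_h(F) + \epsilon \|h\|_\infty$; boundedness for every $r \geq 1$ then follows from the inclusion $\mathcal{B}_{r,d_1}(F,\epsilon) \subseteq \mathcal{B}_{1,d_1}(F,\epsilon)$ recorded in Remark~\ref{orderwd}. If $h \equiv \|h\|_\infty$ on $[1-\eta,1]$, the same $F^*_\eta$ from part (i) attains the bound, since in that case $(\epsilon/\eta)\int_{1-\eta}^1 h(v)\, dv = \epsilon \|h\|_\infty$.

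The main subtlety is the non-attainment branch, where $h(v) < \|h\|_\infty$ for every $v < 1$. Here the set $\{h = \|h\|_\infty\}$ has Lebesgue measure zero, so equality in the Hölder/pointwise step above would force $G^{-1} = F^{-1}$ almost everywhere, contradicting $WD_{1,d_1}(F,G) = \epsilon > 0$; hence the supremum is not attained. To approximate it I would use $F^*_{1/n}$, which is feasible, with
\begin{equation*}
\pi_h(F^*_{1/n}) - \pi_h(F) \;=\; n\epsilon \int_{1 - 1/n}^1 h(v)\, dv \;\in\; \bigl[\epsilon\, h(1 - 1/n),\; \epsilon \|h\|_\infty\bigr],
\end{equation*}
and this converges to $\epsilon \|h\|_\infty$ because $h$ is nondecreasing and bounded, so $h(1 - 1/n) \to \|h\|_\infty$. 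The hard part is purely this non-attainment argument; the monotonicity of $h$ (from concavity of $g$) is the ingredient that makes both the sharpness of the bound and the convergence of the approximating sequence fall out automatically.
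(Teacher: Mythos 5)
Your proof is correct and follows essentially the same route as the paper: quantile-coordinate reformulation, the test distributions $F^*_\eta$ (the paper uses the sequence $G_n = F^*_{1/n}$ for part (i)), the pointwise H\"older bound for the upper estimate, and the nested-balls remark for general $r\geq 1$. The one genuine addition is your explicit non-attainment argument — that $\{h = \|h\|_\infty\}$ having measure zero forces $G^{-1} = F^{-1}$ a.e.\ under equality in the H\"older step, contradicting $WD_{1,d_1}(F,G) = \epsilon$ — which the paper asserts but does not prove; that is a worthwhile, correct supplement.
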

\begin{proof} (i) Given that $h$ is increasing and unbounded,  the increasing sequence $K_n = h\left(1- 1/n \right)$, is such  that $\lim_{n\rightarrow \infty } K_n =\infty$. For all $n\in \mathbb{N}$ we define a distribution $G_n$ such that 
$$G_n^{-1}(v) = F^{-1}(v) +   \epsilon\cdot n\,  \mathds{1}_{[1-1/n , 1]}.$$
$G_n$ is on the boundary of $ \mathcal{B}_{1,d_1}(F,\epsilon)  $  and 
\[ \pi_h(G_n)  =  \pi_h(F) + \epsilon\cdot n \int_{1-1/n}^1 h(v)\, dv  \geq  \pi_h(F) + \epsilon \, K_n. \]
 
Hence, (\ref{P-r}) is unbounded for $r=1$.
(ii) It is sufficient to prove (\ref{P-r}) is bounded for $r=1$ since  $\mathcal{B}_{1,d_1} \supseteq  \mathcal{B}_{r,d_1}$ for all $r\geq 1$ (see Remark~\ref{orderwd}). Any admissible $G$ for $r=1$  can be written as  $G^{-1}(v) = F^{-1}(v) + G_1^{-1}(v)$, where $G_1$ is such that
$\int_0^1 G_1^{-1}(v) \, dv \le \epsilon$. Since $F$ has its first moment finite, the following upper bound is finite:
\begin{equation}\label{optimum}
\pi_h(G) = \pi_h(F) + \int_0^1  G_1^{-1}(v)\, h(v) \, dv \le \pi_h(F) + \epsilon \cdot \|h\|_\infty.
\end{equation}
The distribution $F_\eta^*(x)$ given in  the Proposition has inverse
$$(F_\eta^*)^{-1} (v) = F^{-1}(v) + \frac{\epsilon}{\eta} \mathds{1}_{[1-\eta,1]}.$$
Therefore, $F_\eta^*  $ is on the boundary of $\mathcal{B}_{1,d_1}(F,\epsilon)$ and
\[
\pi_h(F_\eta ^*)=\int_0^1 \left(F^{-1}(v) + \frac{\epsilon}{\eta} \mathds{1}_{[1-\eta,1]}\right) h(v) \, dv= \pi_h(F) + \frac{\epsilon}{\eta} \int_{1-\eta}^1 h(v) \, dv.
\]
If $h(v) = \|h\|_\infty$ for $v \ge 1-\eta$, then $F_\eta^*$ attains the upper bound in (\ref{optimum}).
Otherwise, $F^*_{1/n}$ approaches the maximum from below, since
$$ (F_{1/n}^*)^{-1} (v) = F^{-1}(v) + \epsilon \cdot n \mathds{1}_{[1-1/n,1]}, $$
and
\[
\pi_h(F^*_{1/n})= \pi_h(F) + \epsilon \cdot n  \int_{1-1/n}^1 h(v) \, dv \uparrow \pi_h(F) + \epsilon \cdot \|h\|_\infty.
\]

\end{proof}

\begin{remark} The solution $F^*_\eta$ in Proposition~\ref{boundwd1} is not unique. Any distribution $\tilde{F}_\eta $ such that ${\tilde{F}_\eta}^{-1}(v) = F^{-1}(v) + \frac{\epsilon}{\eta}\cdot k(v)\mathds{1}_{[1-\eta ,1]} $, with $\frac{1}{\eta}\cdot k(v)\mathds{1}_{[1-\eta ,1]}$ a density on $[0,1]$, attains the supremum.
\end{remark}
As an example, we illustrate the worst case distribution for the $\AVaR$ premium.
\begin{center}
\includegraphics[scale=0.4]{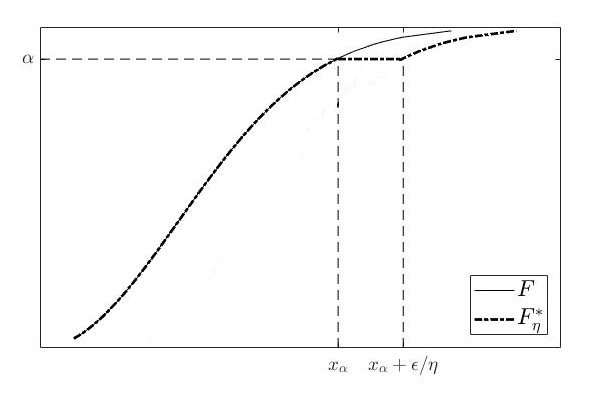}\captionof{figure}{The worst case distribution $F^*_\eta $ for $h_\alpha$ with $\alpha =0.9$   is obtained by shifting   $F$ from $x_\alpha$,   a length $\epsilon /\eta$, where $x_\alpha =F^{-1}(\alpha)$ and $\eta =1-\alpha$.}\label{worstcaseha9}
\end{center}

If  $h$ is unbounded we can characterize the solution of  (\ref{P-r}) as follows.


\begin{proposition}[Characterization of the worst case distribution  for $\mathbf{r\geq p>1}$]\label{unboundwd1}  Let the baseline distribution $F$ have finite $p-$moments. If  $h\in \mathcal{L}^q$, then  (\ref{P-r})  is bounded for $r\geq p$. If $r=p$, the optimal value of (\ref{P-r})  is
$$\pi^\epsilon_{h, p, d_1}(F)= \pi_h(F) + \epsilon \cdot \|h\|_q^q.$$
Also in this case, the term $\epsilon \cdot \|h\|_q^q$ is interpreted as {\em ambiguity premium}.

Furthermore, the worst case distribution $F^*$ of (\ref{P-r}) for $r=p$ is such that
$${F^*}^{-1}(v) = F^{-1}(v) + \epsilon \cdot \left(\frac{h(v)}{|| h||_q}\right)^{q/p} .$$

\end{proposition}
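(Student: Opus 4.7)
The plan is to reduce the optimization to an extremal H\"older inequality once the Wasserstein ball is translated into an $L^p$-ball on quantile functions. By Remark~\ref{orderwd}, for $r \geq p$ we have $\mathcal{B}_{r,d_1}(F,\epsilon) \subseteq \mathcal{B}_{p,d_1}(F,\epsilon)$, so the supremum in (\ref{P-r}) for general $r \geq p$ is dominated by the one at $r = p$; it therefore suffices to carry out the analysis at $r = p$ and deduce boundedness for larger $r$ as a corollary.

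Next, parametrize any admissible $G \in \mathcal{B}_{p,d_1}(F,\epsilon)$ by its quantile shift $\Delta(v) := G^{-1}(v) - F^{-1}(v)$. The quantile representation of the Wasserstein distance from Section~\ref{sec:3} turns the constraint into the purely analytic $L^p$-condition $||\Delta||_p \le \epsilon$, while
$$\pi_h(G) \;=\; \pi_h(F) \;+\; \int_0^1 \Delta(v)\, h(v)\, dv.$$
Applying H\"older's inequality with conjugate exponents $p,q$ yields
$$\int_0^1 \Delta(v)\, h(v)\, dv \;\leq\; ||\Delta||_p \cdot ||h||_q \;\leq\; \epsilon\,||h||_q < \infty,$$
so, since finiteness of the $p$-moments of $F$ and $h\in\mathcal{L}^q$ already ensure $\pi_h(F)<\infty$, the problem (\ref{P-r}) is bounded for $r=p$, and hence for all $r\ge p$.

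To identify the worst case distribution I would invoke the equality condition in H\"older: the inequality is tight precisely when $|\Delta|^p$ is proportional to $h^q$ and $\Delta\cdot h\ge 0$ pointwise. Since $h\ge 0$, this forces $\Delta$ of the form $\Delta(v) = c\cdot h(v)^{q/p}$; imposing $||\Delta||_p=\epsilon$ pins down the constant, giving
$$\Delta(v) \;=\; \epsilon \left(\frac{h(v)}{||h||_q}\right)^{q/p},$$
which matches the formula claimed for ${F^*}^{-1}$. Plugging this back, the identity $q/p + 1 = q$ (equivalent to $1/p+1/q=1$) collapses the resulting integral to the explicit ambiguity premium in closed form. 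The one point requiring care, and the main technical obstacle, is verifying that $F^{-1} + \Delta$ is still a bona fide quantile function, i.e. non-decreasing. This is exactly where the monotonicity built into the distortion density enters: since $h$ is non-decreasing, so is $h^{q/p}$, hence so is $\Delta$, and therefore $F^{-1}+\Delta$ is non-decreasing and defines the inverse of a legitimate distribution $F^*\in\mathcal{B}_{p,d_1}(F,\epsilon)$ attaining the supremum.
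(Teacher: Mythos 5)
Your argument follows the same H\"older route as the paper: translate the Wasserstein ball into the $L^p$ constraint $\|\Delta\|_p\le\epsilon$ on quantile shifts, bound $\int\Delta\,h\le\epsilon\|h\|_q$, and exhibit the extremizer $\Delta=\epsilon(h/\|h\|_q)^{q/p}$. You go slightly further in two useful ways: you derive the extremizer from the equality condition in H\"older rather than merely verifying a candidate pulled from the statement, and — more importantly — you check that $F^{-1}+\Delta$ is non-decreasing (using that $h$, hence $h^{q/p}$, is non-decreasing), which is required for $F^*$ to be a genuine distribution and which the paper's own proof leaves implicit. One thing worth flagging: your computation, like the paper's proof, yields the ambiguity premium $\epsilon\,\|h\|_q$, whereas the proposition as stated writes $\epsilon\,\|h\|_q^{\,q}$; the exponent $q$ there is a typo (the identity $q/p+1=q$ gives $\int h^{q/p+1}/\|h\|_q^{q/p}=\|h\|_q^{\,q}/\|h\|_q^{\,q-1}=\|h\|_q$), so your closed form is the correct one.
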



\begin{proof}
We prove (\ref{P-r}) is bounded for $r=p$ and by Remark~\ref{orderwd} we have boundness for all $r\geq p$.  Notice that, for all admissible $G$, if $r=p$, we have
 \begin{align*}
 \int_0^1 G^{-1}(v) \, h(v)\, dv &\leq \int_0^1 F^{-1}(v)\, h(v) \, dv + \int_0^1 |G^{-1} -F^{-1}| \, h(v)\, dv \\
 &\leq \pi_h(F) + \left(\int_0^1  |G^{-1} -F^{-1}|^p \, dv \right) ^{1/p}   ||h||_q   \\
 &\leq \pi_h(F) + \epsilon \cdot  || h||_q.
 \end{align*}
$F^*$ is admissible since it is on the boundary of $\mathcal{B}_{p, d_1}(F, \epsilon)$ 
$$WD_{p, d_1}(F, F^*) =\left( \int_0^1  \epsilon^p \cdot \left( \frac{h(v)}{ || h ||_q}\right) ^q  \, dv \right)^{1/p}=\epsilon,$$ and $F^*$ attains the upper bound 
\[ \pi_h(F^*)  - \pi_h(F) = \int_0^1  \epsilon \cdot \left( \frac{h(v)}{ || h ||_q}\right) ^{q/p} h(v)\, dv =\epsilon \cdot \int_0^1\frac{ h(v)^q}{||h ||^{q-1}_q} \, dv =\epsilon \cdot || h ||_q. \]
\end{proof}

Under some conditions on $h$ we can also prove unboundness of (\ref{P-r}) for $r>p>1$ in the case where $h$ is not in $L^q$, where $q$ is the conjugate of $p$, the finite moments of $F$.
\begin{proposition}[Unboundness for $\mathbf{r>p>1}$]  Let the baseline distribution $F$ have finite $p-$moments and let $h\notin \mathcal{L}^q$, for $p,\, q$ conjugates and $r,\,s$ conjugates with $r>1$. If there exists $s_1<s$ such that $\int_0^1 h(v)^{s_1} \, dv =\infty$   and $h\in \mathcal{L}^t$, for all $t<s_1$, then (\ref{P-r}) is unbounded for all $r>p$. 
\end{proposition}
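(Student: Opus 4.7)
The plan is to construct, in the spirit of Proposition~\ref{boundwd1}(i), a sequence $(G_n)\subset\mathcal{B}_{r,d_1}(F,\epsilon)$ with $\pi_h(G_n)\to\infty$. The key preparatory observation is that on the probability space $[0,1]$ one has $\mathcal{L}^s\subseteq\mathcal{L}^{s_1}$ whenever $s>s_1$, so the hypotheses $\int_0^1 h^{s_1}\,dv=\infty$ and $s_1<s$ together force $h\notin\mathcal{L}^s$, i.e.\ $\|h\|_s=\infty$. This is precisely the obstruction that prevents the upper bound in Proposition~\ref{unboundwd1} from being finite for the pair $(r,s)$, and I will now leverage it in the opposite direction.

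The natural candidate for the extremal direction of perturbation is the $\mathcal{L}^r$--$\mathcal{L}^s$ duality maximizer $h^{s/r}$, but this has infinite $\mathcal{L}^r$-norm, so I would first truncate. Set $h_n(v):=\min\{h(v),n\}$, $k_n:=h_n^{s/r}$, and $I_n:=\|k_n\|_r^r=\int_0^1 h_n^s\,dv$. Since $h_n\uparrow h$ pointwise and $h\notin\mathcal{L}^s$, monotone convergence gives $I_n\to\infty$. Set $c_n:=\epsilon/\|k_n\|_r=\epsilon\,I_n^{-1/r}$ and define $G_n$ through its quantile
\[
G_n^{-1}(v):=F^{-1}(v)+c_n\,k_n(v).
\]
Because both $F^{-1}$ and $k_n$ are non-decreasing, $G_n^{-1}$ is a valid quantile function, and a direct calculation gives $WD_{r,d_1}(F,G_n)^r=\int_0^1 c_n^r k_n^r\,dv=c_n^r I_n=\epsilon^r$, so $G_n\in\mathcal{B}_{r,d_1}(F,\epsilon)$.

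To close the argument, I would exploit the algebraic identity $s/r+1=s$ (equivalent to $1/r+1/s=1$) together with $h_n\leq h$:
\[
\int_0^1 k_n\,h\,dv\;=\;\int_0^1 h_n^{s/r}h\,dv\;\geq\;\int_0^1 h_n^{s/r}h_n\,dv\;=\;\int_0^1 h_n^s\,dv\;=\;I_n,
\]
whence
\[
\pi_h(G_n)-\pi_h(F)\;=\;c_n\!\int_0^1 k_n\,h\,dv\;\geq\;c_n I_n\;=\;\epsilon\,I_n^{\,1-1/r}\;=\;\epsilon\,I_n^{1/s}\;\longrightarrow\;\infty,
\]
(with the case $\pi_h(F)=\infty$ trivial, since then $F$ itself is in the ball). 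The main obstacle I anticipate is that the supremum must be realized via a \emph{monotone} perturbation, since $G_n^{-1}$ must remain a quantile function; without this constraint one is merely appealing to $\mathcal{L}^r$--$\mathcal{L}^s$ duality. This is resolved by the fortunate fact that $h$ is itself non-decreasing, so the duality extremizer $h^{s/r}$ and its truncations $h_n^{s/r}$ are non-decreasing automatically.
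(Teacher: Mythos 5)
Your proof is correct, and it takes a genuinely different route from the paper's. The paper truncates the \emph{domain}: it fixes a perturbation direction $\psi_\eta=h^{s_1-1}\mathds{1}_{[1-\eta,1]}$, uses the auxiliary exponent $s_1<s$ and the hypothesis $h\in\mathcal{L}^t$ for $t<s_1$ precisely to guarantee $\psi_\eta\in\mathcal{L}^r$ (via $r(s_1-1)<s_1$), and then exhibits a \emph{single} admissible $G_\eta$ with $\pi_h(G_\eta)=\pi_h(F)+\int_{1-\eta}^1 h^{s_1}=\infty$. You instead truncate the \emph{range}: you cap $h$ at level $n$, aim at the true $\mathcal{L}^r$--$\mathcal{L}^s$ duality extremizer $h^{s/r}=h^{s-1}$ rather than a surrogate $h^{s_1-1}$, and produce a sequence $G_n$ on the sphere of the ball with $\pi_h(G_n)\to\infty$. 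Your argument is leaner in two respects: it only invokes $h\notin\mathcal{L}^s$ (derived, as you correctly note, from $\int h^{s_1}=\infty$ with $s_1<s$ and the inclusion $\mathcal{L}^s\subseteq\mathcal{L}^{s_1}$ on $[0,1]$), and it never uses the hypothesis ``$h\in\mathcal{L}^t$ for all $t<s_1$'' at all, since boundedness of $h_n$ makes $I_n<\infty$ automatic; so you have in effect established the conclusion under a weaker hypothesis. The paper's version buys a slightly stronger form of unboundedness (an explicit single distribution in the ball with infinite premium) at the cost of the extra integrability assumption; your version buys simplicity and generality. All the details you need --- monotonicity of $k_n=h_n^{s-1}$ because $s>1$ and $h$ is nondecreasing, the exponent identity $s/r+1=s$, and the calibration $c_n=\epsilon I_n^{-1/r}$ placing $G_n$ exactly on the Wasserstein sphere --- check out.
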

\begin{proof}
Define  $\psi_\eta(v) = h(v)^{s_1-1}\mathds{1}_{[1-\eta, 1]}$. Since $\psi_\eta \in \mathcal{L}^r$ for $r>1$ (note that $r(s_1 - 1)<s_1$), there exists an $0<\eta<1$ such that 
$$ \int_0^1 \psi_\eta(v)^r\, dv = \int_{1-\eta}^1 h(v)^{r(s_1 - 1)}\, dv < \epsilon.$$
Thus, the distribution $G_\eta$ such that $G_\eta^{-1}(v) = F^{-1} + \psi_\eta(v)$ is in $\mathcal{B}_{r,d_1}(F, \epsilon)$.  And its premium is unbounded
$$ \pi_h(G_\eta) = \pi_h(F) + \int_0^1 \psi_\eta (v)\, h(v)\, dv =  \pi_h(F) + \int _{1-\eta}^1 h(v)^{s_1}> \infty. $$
\end{proof}

\begin{remark} If instead of the metric $d_1$ we consider $d_p(x,y)= |x^p - y^p|$ as underlying metric for the Wasserstein distance, we could define the ambiguity principle 
\begin{equation}\tag{P-dp}\label{P-dp}
\pi^\epsilon_{h,1,d_p} (F)=\sup \{ \pi_h(G) : G\in\mathcal{B}_{1,d_p}(F,\epsilon) \} ,
\end{equation}
where $\mathcal{B}_{r,d_p}(F,\epsilon) =\{G:\, WD_{r,d_p}(G, F)\leq \epsilon\}.$ It is easy to see that, if $F$ has $p-$moments the constraint of the balls make all of admissible distributions to have also $p-$moments, therefore for  Proposition~\ref{contqbigg1}, if $h\in \mathcal{L}^q$ , then (\ref{P-dp}) is bounded. Furthermore, continuity respect to this Wasserstein distance implies our continuity results in Section~\ref{sec:3}.

\end{remark}

\section{Conclusions}

After some introduction about general premium principles we propose generalizations of the distortion premium. In addition, we have studied in detail three  functional relationships for the distortion premium
\begin{itemize}
\item the premium function $F \mapsto \pi_h(F)$, i.e. the properties of $\pi_h$ as a premium principle,
\item the direct function $h \mapsto \pi_h(F)$, i.e. the dependency on the distortion density,
\item the inverse functions $\pi_h(F) \mapsto h$.
\end{itemize}

The smoothness properties are important for robustness aspects, however  it is well known that a quite smooth direct function makes the inverse problem difficult.
We showed however that the inverse problem is identifiable and we gave a simple quadratic optimization problem to estimate it from empirical data. We successfully illustrated this in a simulation study, the application on real data  is left for further research.
We also identified the ambiguity premium for Wasserstein balls as ambiguity sets offering, in some cases, a specific formulation of the worst case distribution. It turned out that the extra premium for ambiguity depends on the distortion function $h$ and  in a multiplicative way on the ambiguity radius $\epsilon$, but does not on the loss distribution $F$ itself. Thus it is the same for all contracts and can be calculated in a separate manner. Finally, by using different distances as underlying metrics for the Wasserstein ball, and hence, for the ambiguity set, we could find bounds for the robust premium is always bounded.

\section*{Appendix}

{\bf Properties of the generalized distortion premium. } We consider here the generalized distortion premium
\begin{equation}\label{generalizedAvar}
R(X)=\int_0^1 \nu(\AVaR_\alpha(X)) \, k(\alpha) \, d\alpha,
\end{equation}
where $X \in \mathcal{L}^1$, $\nu$ a convex, monotone Lipschitz function  and $k$ a non-negative weight function on [0,1], which satisfies $\int_0^1 (1-\alpha)^{-1} \, k(\alpha) \, d\alpha < \infty$. Clearly, $X \mapsto R(X)$ is convex and monotone, but is positively homogeneous and/or translation equivariant only if $\nu$ is a multiple of the identity.
To see this, consider the subdifferential of $R$ at $Y \in \mathcal{L}^1$ is
\begin{equation}\label{ZetYpsilon}
Z_Y = \int_0^1 \nu^\prime (AV@R_\alpha(Y)) (1-\alpha)^{-1} \one_{Y > F_Y^{-1}(\alpha)} \, k(\alpha) \, d\alpha \quad \in \mathcal{L}^\infty,
\end{equation}
where $F_Y$ is the distribution function of $Y$.
Notice that $\mathbb{E}(Y \cdot Z_Y)$ depends only on the distribution function $F_Y$. After some calculation, one finds that
$$\mathbb{E}(Y \cdot Z_Y) = \int_0^1 \nu^\prime(AV@R_\alpha(Y)) \cdot AV@R_\alpha(Y)  \, k(\alpha) \, d\alpha.$$

Finally, based on the subdifferential, one gets a dual representation
\begin{eqnarray*}
R(X) &=& \sup_{Y \in \mathcal{L}^1} \{ \mathbb{E}(X \cdot Z_Y) \\
       &-& \int_0^1 [\nu^\prime(AV@R_\alpha(Y)) AV@R_\alpha(Y)  - \nu(AV@R_\alpha(Y))] \, k(\alpha) \, d\alpha \},
\end{eqnarray*}
where $Z_Y$ is given by (\ref{ZetYpsilon}).

It is well known (see Pflug and R{\"o}misch 2007~\cite{pflug2007}) that $R$ is positively homogeneous only if 
$$\int_0^1 [\nu^\prime(AV@R_\alpha(Y)) AV@R_\alpha(Y)  - \nu(AV@R_\alpha(Y))] \, k(\alpha) \, d\alpha=0,$$ 
when it is finite. This implies that $\nu(x)= \gamma \cdot x$ for some $\gamma > 0$. $R$ is translation equivariant, if in addition the expectation of the dual multiplier $Z_Y$ is one, which in happens only if $  \int_0^1  \gamma\, k(\alpha) \, d\alpha=1$. \\

{\bf On  different underlying metrics for the Wasserstein distance. } 
There is a whole family of distances on $\mathbb{R}$, which are generalizations of $d_1$. Set for $x,y \ge 0$,
$d_p(x,y)= |x^p - y^p|$. The Wasserstein distance of order 1 with distance $d_p$ is
$$WD_{1,d_p}(F,G) = \int_0^1 |(F^{-1}(v))^p - (G^{-1}(v))^p| \, dv.$$

\begin{lemma}\label{wdp1boundedbywd11} Notice that for $ p\geq 1$
$$WD_{p,d_1}(F,G) \le [WD_{1,d_p}(F,G)]^{1/p}.$$
\end{lemma}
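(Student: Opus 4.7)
The plan is to reduce the claim to a pointwise inequality between the integrands and then verify that pointwise inequality by a standard convexity argument. Recall the explicit quantile representations
\[
WD_{p,d_1}(F,G)^p = \int_0^1 |F^{-1}(v)-G^{-1}(v)|^p\,dv, \qquad WD_{1,d_p}(F,G) = \int_0^1 |(F^{-1}(v))^p - (G^{-1}(v))^p|\,dv.
\]
Raising the desired inequality to the $p$th power, it suffices to show
\[
\int_0^1 |F^{-1}(v)-G^{-1}(v)|^p\,dv \;\le\; \int_0^1 |(F^{-1}(v))^p - (G^{-1}(v))^p|\,dv,
\]
which will follow if I can establish the pointwise inequality
\[
|x-y|^p \;\le\; |x^p - y^p| \qquad \text{for all } x,y \ge 0,\; p\ge 1. \tag{$\ast$}
\]
Since $F,G$ are loss distributions supported on $\mathbb{R}_{\ge 0}$, the quantile functions are nonnegative and the reduction to $(\ast)$ is valid.

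To prove $(\ast)$, I assume without loss of generality that $x\ge y\ge 0$, so the inequality becomes $(x-y)^p \le x^p - y^p$. The cleanest way is to invoke superadditivity of the map $t\mapsto t^p$ on $[0,\infty)$ for $p\ge 1$: since this map is convex and vanishes at the origin, one has $(a+b)^p \ge a^p + b^p$ for all $a,b\ge 0$. Applying this with $a=x-y\ge 0$ and $b=y\ge 0$ yields
\[
x^p = \bigl((x-y)+y\bigr)^p \;\ge\; (x-y)^p + y^p,
\]
which rearranges to $(x-y)^p \le x^p - y^p$, as required. (An equivalent verification is to note that $f(x) := x^p - y^p - (x-y)^p$ satisfies $f(y)=0$ and $f'(x) = p[x^{p-1}-(x-y)^{p-1}] \ge 0$ for $x\ge y$ since $p-1\ge 0$.)

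Integrating $(\ast)$ applied pointwise to $x=F^{-1}(v)$ and $y=G^{-1}(v)$ over $v\in[0,1]$ gives $WD_{p,d_1}(F,G)^p \le WD_{1,d_p}(F,G)$, and taking the $p$th root concludes the proof. I do not anticipate any real obstacle: the argument is elementary once one notices that the comparison localizes to the integrand level, and the only subtlety worth flagging is the standing nonnegativity of the loss quantiles, which is what makes the superadditivity step available.
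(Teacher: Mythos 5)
Your proof is correct and takes essentially the same route as the paper: both reduce the claim to the pointwise inequality $|x-y|^p \le |x^p - y^p|$ on $[0,\infty)$ and then integrate over $v\in[0,1]$. Worth noting: you correctly attribute the pointwise inequality to the \emph{superadditivity} of $t\mapsto t^p$ for $p\ge 1$ (a convex function vanishing at the origin), whereas the paper's proof misnames this property as subadditivity; your terminology is the right one.
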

\begin{proof}
By the subadditivity of $x \mapsto x^p$   on $\mathbb{R}_{\geq 0}$ one has that $|x-y|^p \le |x^p - y^p|$ and therefore
\[
WD_{p,d_1}(F,G) =  \left[ \int_0^\infty |F^{-1}(v) - G^{-1}(v)|^p \, dv \right]^{1/p} \leq  [WD_{1,d_p}(F,G)]^{1/p}.
\]
\end{proof}

\begin{remark}\label{wassersteindistancep}
This argument also shows that if $F$ has finite $p-$moments and if \\
$WD_{1,d_p}(F,G)< \infty$ (and a fortiori if $WD_{p,d_1}(F,G)< \infty$), then
also $G$ has finite $p-$moments. On the other hand, if both $F$ and $G$ have finite $p-$moments, then
$$WD_{1,d_p}(F,G) \le p\cdot WD_{p,d_1}(F,G) (1+\|F^{-1}\|_p^{p-1} + \|G^{-1}\|_p^{p-1})$$
(see Lemma 2.19 in\cite{pflug2014}). Therefore, imposing conditions on $WD_{1,d_p}$ or on  $WD_{p,d_1}$ leads to quite similar results.
\end{remark}
\end{document}